\tikzset{
 mainNode/.style =
    { circle
    , draw
    }
}
\newcommand{\mat}[1]{\ensuremath{\boldsymbol{#1}}}
\newcommand{\vect}[1]{\ensuremath{\boldsymbol{#1}}}
\newcommand{\bvect}{\ensuremath{\boldsymbol{B}}}
\newcommand{\trans}[1]{\ensuremath{{#1}^{\scriptscriptstyle \mathsf{T}}}}
\newcommand{\norm}[1]{\ensuremath{\left \| #1 \right \|}}
\newcommand{\inv}[1]{\ensuremath{{#1}^{-1}}}
\newcommand{\etal}{\emph{et al.\!}}
\renewcommand{\th}{\ensuremath{^{\textrm{th}}}}
\renewcommand{\epsilon}{\varepsilon}
\newcommand{\NMI}{\ensuremath{\mathrm{NMI}}}
\def\CPP{{C\nolinebreak[4]\hspace{-.05em}\raisebox{.4ex}{\tiny\bf ++}}}
\newtheorem{theorem}{Theorem}
\newtheorem{proposition}{Proposition}
\theoremstyle{definition}
\title{Overlapping Communities in Social Networks\thanks{Research 
funded by DFG Project RO 927/13-1 ``Pragmatic Parameterized Algorithms.''}}
\author{Jan Dreier} 
\author{Philipp Kuinke} 
\author{Rafael Przybylski} 
\author{Felix Reidl} 
\author{Peter Rossmanith} 
\author{Somnath Sikdar}
\affil{\small \emph{Theoretical Computer Science} \\ \emph{RWTH Aachen University, 52074 Aachen, Germany}}
\date{}
\begin{document}
\maketitle

\begin{abstract}
Complex networks can be typically broken down into groups or modules. Discovering 
this ``community structure'' is an important step in studying the large-scale 
structure of networks. Many algorithms have been proposed for community detection 
and benchmarks have been created to evaluate their performance. Typically algorithms 
for community detection either partition the graph (non-overlapping 
communities) or find node covers (overlapping communities). 

In this paper, we propose a particularly simple semi-supervised learning 
algorithm for finding out communities. In essence, given the community information of a small 
number of ``seed nodes'', the method uses random walks from the seed nodes 
to uncover the community information of the whole network. The algorithm runs 
in time $O(k \cdot m \cdot \log n)$, where $m$ is the number of edges; $n$ 
the number of links; and $k$ the number of communities in the network. 
In sparse networks with $m = O(n)$ and a constant number of communities, this 
running time is almost linear in the size of the network. Another important 
feature of our algorithm is that it can be used for either non-overlapping 
or overlapping communities. 

We test our algorithm using the LFR benchmark created by Lancichinetti, 
Fortunato, and Radicchi~\cite{LFR08} 
specifically for the purpose of evaluating such algorithms. Our algorithm 
can compete with the best of algorithms for both non-overlapping 
and overlapping communities as found in the comprehensive study of 
Lancichinetti and Fortunato~\cite{LF09}.

\end{abstract}

\section{Introduction}
Many real-world graphs that model complex systems exhibit an organization 
into subgraphs, or \textit{communities} that are more densely connected on the inside than between each other. 
Social networks such as Facebook and LinkedIn divide into groups of friends 
or coworkers, or business partners; scientific collaboration networks divide 
themselves into research communities; the World Wide Web divides into groups 
of related webpages. The nature and number of communities provide 
a useful insight into the structure and organization of networks. 

Discovering the community structure of networks is an 
important problem in network science and is the subject 
of intensive research~\cite{GN02, GN04, CNM04, RCC04, DM04, PDFV05, NL07, 
BGLL08, RB08, RN09}. Existing community detection algorithms are 
distinguished by whether they find partitions of the node set 
(non-overlapping communities) or node covers (overlapping communities). 
Typically finding overlapping communities is a much harder problem and most of the 
earlier community detection algorithms focused on finding disjoint 
communities. A comparative analysis of several community detection algorithms 
(both non-overlapping and overlapping) was presented by Lancichinetti and Fortunato 
in~\cite{LF09}. In this paper we closely follow their test framework, 
also called the LFR-benchmark.

The notion of a community is a loose one and currently there is no 
well-accepted definition of this concept. A typical approach is to define an 
objective function on the partitions of the node set of the network 
in terms of two sets of edge densities: the density of the 
edges within a partite set (intra-community edges) and the density of edges across partitions 
(inter-community edges). The ``correct'' partition is the one that maximizes this 
function. Various community detection algorithms formalize this
informal idea differently. One of the very first algorithms by
Girvan and Newman~\cite{GN02} introduced a measure known as \textit{modularity}
which, given a partition of the nodes of the network, compares the fraction of 
inter-community edges with the edges that would be present had they been 
rewired randomly preserving the node degrees. Other authors such as Palla 
\etal~\cite{PDFV05} declare communities as node sets that formed 
by overlapping maximal cliques. Rosvall and Bergstrom~\cite{RB08} 
define the goodness of a partition in terms of the number of bits required to 
describe per step of an infinite random walk in the network, the intuition being 
that in a ``correct'' partition, a random walker is likely to spend more time 
within communities rather than between communities, thereby decreasing the 
description of the walk.  

A severe restriction of many existing community detection algorithms 
is that they are too slow. Algorithms that optimize modularity typically 
take $O(n^2)$, even on sparse networks. The overlapping clique finding 
algorithm of Palla \etal~\cite{PDFV05} take exponential time in the worst case.
In other cases, derivation of worst-case running time bounds are ignored. 

\paragraph{Our contribution.}
Given that it is unlikely that users of community detection algorithms 
would unanimously settle on one definition of what constitutes a community, 
we feel that existing approaches ignore the \emph{user perspective}.
To this end, we chose to design an algorithm that takes the network structure 
as well as user preferences into account. 
The user is expected to classify a small set of nodes of the network 
into communities (which may be 6--8\% of the nodes of each community).
Obviously this is possible only when the user has some 
information about the network, such as its semantics, which nodes 
are important and into which communities they are classified. 

Such situations are actually quite common. The user might have data only 
on the leading scientific authors in a co-authorship network 
and would like to find out the research areas of the remaining members of the network. 
He may either be interested in a broad partition of the network into into its main fields
or a fine grained decomposition into various subfields.
By labeling the known authors accordingly,
the user can specify which kind of partition he is interested in.
Another example would be the detection of trends in a social network.
Consider the case where one knows the political affiliations of some people 
and aims to discover political spectrum of the whole network, 
for example, to predict the outcome of an election.

Another scenario where this may be applicable is in recommendation systems. 
One might know the preferences of some of the users of an online retail 
merchant possibly because they purchase items much more frequently than others. 
One could then use this in the network whose nodes consist of users, with two 
nodes connected by an edge if they represent users that had purchased similar products in the past. 
The idea now would be to use the knowledge of the preferences of a few to 
predict the preferences of everyone in the network.

An important characteristic of algorithms surveyed in~\cite{LF09} 
is that the algorithms either find disjoint communities or overlapping 
ones. Most algorithms solve the easier problem of finding disjoint communities. 
The ones that are designed to find overlapping communities such as the overlapping clique finding 
algorithm of Palla \etal~\cite{PDFV05} do not seem to yield very good results (see~\cite{LF09}).
Our algorithm naturally extends to the overlapping case. Of course, there is a higher 
price that has to be paid in that the number of nodes that need to be classified by the user 
typically is larger (5\% to 10\% of the nodes per community). The algorithm, however,
does not need any major changes and we view this is as an aesthetically pleasing 
feature of our approach. 

Thirdly, in many other approaches, the worst-case running time of the algorithms 
is neither stated nor analyzed. We show that our algorithm runs in time $O(k \cdot m \cdot \log n)$, 
where $k$ is the number of communities to be discovered (which is supplied by the user), 
$n$ and $m$ are the number of nodes and edges in the network. In the case of sparse graphs 
and a constant number of communities, the running time is $O(n \cdot \log n)$. Given that even 
an $O(n^2)$ time algorithm is too computationally expensive on many real world graphs, 
a nearly linear time algorithm often is the only feasible solution.   

Finally, we provide an extensive experimental evaluation of our algorithm on the 
LFR benchmark. In order to ensure a fair comparison with other algorithms 
reviewed in~\cite{LF09}, we choose all parameters of the benchmark as in the original paper.

This paper is organized as follows. In Section~\ref{sec:major_algorithms}, we review some 
of the more influential algorithms in community detection. In Section~\ref{sec:algorithm}, we 
describe our algorithm and analyze its running time. In Sections~\ref{sec:experiment_setup} 
and~\ref{sec:experiment_results}, we present our experimental results. Finally we conclude 
in Section~\ref{sec:conclusions} with possibilities of how our approach might be extended.

\section{The Major Algorithms} \label{sec:major_algorithms}
In what follows, we briefly describe some common algorithms 
for community detection. We are particularly interested in the performance 
of these algorithms as reported in the study by Lancichinetti and 
Fortunato~\cite{LF09} on their LFR benchmark graphs. 

\paragraph{The Girvan-Newman algorithm.} 
One of the very first algorithms for detecting disjoint communities 
was invented by Girvan and Newman~\cite{GN02, GN04}. Their 
algorithm takes a network and iteratively removes edges based 
on a metric called \emph{edge betweenness}. The betweenness of an 
edge is defined as the number of shortest paths between vertex pairs 
that pass through that edge. After an edge is removed, betweenness 
scores are recalculated and an edge with maximal score is deleted. 
This procedure ends when the \textit{modularity} of the resulting partition
reaches a maximum. Modularity is a measure that estimates the quality 
of a partition by comparing the network with a so-called ``null model''
in which edges are rewired at random between the nodes of the network 
while each node keeps its original degree.

Formally, the \emph{modularity} of a partition is defined as:
\begin{equation}\label{eqn:modularity}
	Q = \frac{1}{2m} \sum_{i, j} \left ( A_{i j} - \frac{d_i d_j}{2m} \right ) \delta(i, j),
\end{equation}
where $A_{ij}$ represent the entries of the adjacency matrix of the network; $d_i$ is the 
degree of node $i$; $m$ is the number of edges in the network; and $\delta(i, j) = 1$ if nodes
$i$ and $j$ belong to the same set of the partition and $0$ otherwise. The term $d_i d_j / 2m$ 
represents the expected number of edges between nodes $i$ and $j$ if we consider a random model
in which each node $i$ has $d_i$ ``stubs'' and we are allowed to connect stubs at random to form edges. 
This is the null model against which the within-community edges of the partition is compared against.
The worst-case complexity of the Newman-Girvan algorithm is dominated by the time taken 
to compute the betweenness scores and is $O(m n)$ for general graphs and $O(n^2)$ for sparse 
graphs~\cite{Bra01}.

\paragraph{The greedy algorithm for modularity optimization by Clauset \etal~\cite{CNM04}.}
This algorithm starts with each node being the sole member of a community of one, and 
repeatedly joins two communities whose amalgamation produces the largest increase in modularity. 
The algorithm makes use of efficient data structures and has a running time of $O(m \log^2 n)$, 
which for sparse graphs works out to $O(n \log^2 n)$. 

\paragraph{Fast Modularity Optimization by Blondel \etal} 
The algorithm of Blondel \etal~\cite{BGLL08} consists of two phases which are repeated iteratively. 
It starts out by placing each node in its own community and then locally optimizing the modularity 
in the neighborhood of each node. In the second phase, a new network is built whose nodes are the 
communities found out in the first phase. That is, communities are replaced by ``super-nodes''; the 
within-community edges are modeled by a (weighted) self-loop to the super-node; and the between-community 
edges are modeled by a single edge between the corresponding super-nodes, with the weight being 
the sum of the weights of the edges between these two communities. 
Once the second phase is complete, the entire procedure is repeated until the modularity does not 
increase any further. The algorithm is efficient due to the fact that one can quickly compute the 
change in modularity obtained by moving an isolated node into a community.

Lancichinetti and Fortunato opine that modularity-based methods in general have a rather poor 
performance, which worsens for larger networks. The algorithm due to Blondel \etal\ performs well 
probably due to the fact that the estimated modularity is not a good approximation of the real one~\cite{LF09}.  


\paragraph{The CFinder algorithm of Palla \etal} One of the first algorithms that dealt with overlapping 
communities was proposed by Palla \etal~\cite{PDFV05}. They define a community to be a set of nodes 
that are the union of $k$-cliques such that any one clique can be reached from another via a series of 
adjacent $k$-cliques. Two $k$-cliques are defined to be adjacent if they share $k - 1$ nodes. 

The algorithm first finds out all maximal cliques in the graph, which takes exponential-time
in the worst case. It then creates a symmetric clique-clique overlap matrix $\mat{C}$ which is a square matrix whose 
rows and columns are indexed by the set of maximal cliques in the graph and whose $(i, j)^{\mathrm{th}}$ entry
is the number of vertices that are in both the $i^{\mathrm{th}}$ and $j^{\mathrm{th}}$ clique. This matrix is then modified into 
a binary matrix by replacing all those entries with value less than $k - 1$ by a $0$ and the remaining entries by 
a $1$. The final step is to find the connected components of the graph represented by this binary symmetric matrix
which the algorithm reports as the communities of the network. 

The authors report to have tested the algorithm on various networks including the protein-protein interaction 
network of \emph{Saccharomyces cerevisiae}\footnote{A species of yeast used in wine-making, baking, and brewing.} 
with $k = 4$; the co-authorship network of the Los Alamos condensed matter archive (with $k = 6$). 
Lancichinetti and Fortunato report that CFinder did not perform particularly well on the LFR benchmark and 
that its performance is sensitive to the sizes of community (but not the network size). For networks will 
small communities it has a decent performance, but has a worse performance on those with larger communities. 

\paragraph{Using random walks to model information flow.} Rosvall and Bergman~\cite{RB08} approach the 
problem of finding communities from an information-theoretic angle. They transform the problem 
into one of finding an optimal description of an infinite random walk in the network.  Given a fixed 
partition $M$ of $n$ nodes into $k$ clusters, Rosvall and Bergman use a two-level code where each 
cluster is assigned a unique codeword and each node is assigned a codeword which is unique per community.
One can now define the average number of bits per step that it 
takes to describe an infinite random walk on the network partitioned according to $M$. The intuition 
is that a random walker is statistically likely to spend more time within clusters than between clusters
and therefore the ``best'' partition corresponds to the one which has the shortest possible description.
An approximation of the best partition is found out using a combination of a greedy search heuristic followed by simulated 
annealing. Lancichinetti and Fortunato report that this algorithm (dubbed Infomap) was the best-performing 
among all other community detection algorithms on their benchmark.  

\paragraph{}

\section{The Algorithm} \label{sec:algorithm}
We assume that the complex networks that we deal with are modeled as connected, undirected graphs.  
The algorithm receives as input a network and a set of nodes such that there is at least one node 
from each community that we are aiming to discover. These nodes are called \emph{seed nodes} and 
it is possible that a particular seed node belongs to multiple communities.

The affinity of a node in the network to a community is 1 if it belongs to it; 
if it does not belong to it, it has an affinity of 0. We allow intermediate affinity 
values and view these as specifying a \emph{partial belonging}. The user specifies the affinities 
of the seed nodes for each of the communities. For all other nodes, called 
\emph{non-seed nodes}, we want deduce the affinity to each community using 
the information given by the seed nodes' affinities and the network structure. 
The main idea is that non-seed nodes should adopt the affinities of seed nodes 
within their close proximity. We define a proximity measure based on random walks: 
Each random walk starts at a non-seed node, traverses through the graph, and ends 
as soon as it reaches a seed node. The affinity of a non-seed node~$u$ for a given 
community is then the weighted sum of the affinities of the seed nodes for 
that community and reachable by a random walk starting at $u$, the weights 
being the probabilities that a random walk from $u$ ends up at a certain seed node.

Each step of a random walk can be represented as the iterated product of 
a transition matrix $\mat{P}$. The result of the (infinite) walk itself 
can be expressed as $\lim_{k \rightarrow \infty}{\mat{P}^k}$.
One of the contributions of this paper is to show how the calculation
of these limits can be reduced to solving a symmetric, diagonally dominant
system of linear equations (with different right-hand-sides per community),
which can be done in $O(m \log n)$ time, where~$m$ is the number of edges 
in the graph. The fact that such systems can be solved in almost linear time 
was discovered by Spielman and 
Teng~\cite{ST04,EEST05,ST08,KMP10,KMP11,Vis13}. If we assume that our networks are 
sparse in the sense that $m = O(n)$, the running time of our algorithm can  
be bounded by $O(n \log n)$.

\subsection{Absorbing Markov Chains and Random Walks}
We now provide a formal description of our model. 
The input is an undirected, connected graph $G = (V,E)$ with nodes $v_1, \ldots, v_n$, 
$m$ edges and a nonempty set of $s$ seed nodes. We also know that there are~$k$ 
(possibly overlapping) communities which we want to discover.

The community information of a node~$v$ is represented 
by a $1 \times k$ vector called the \emph{affinity vector} of~$v$, denoted 
by 
\[
	\bvect (v) = \trans{\left ( \alpha (v, 1), \ldots, \alpha (v, k) \right )}.
\] 
The entry $\alpha (v, l)$ of the affinity vector represents the affinity of node~$v$ to community~$l$.
It may be interpreted as the probability that a node belongs to this community.
We point out that $\sum_{i = 1}^k \alpha (v, l)$ need not be~$1$.
An example of this situation is when~$v$ belongs to multiple 
communities with probability~$1$. 
The user-chosen affinity vectors of all seed nodes are part of the input.
The objective is to derive the affinity vectors of all non-seed nodes. 

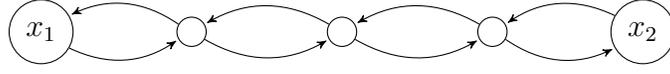
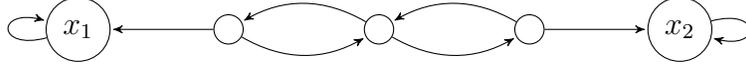
\begin{figure}
\centering
\begin{subfigure}{\textwidth}
    \centering
    \begin{tikzpicture}
        [ auto
        , ->
        , >=stealth'
        , shorten >=1pt
        , node distance=2cm
        ]
        \node[mainNode] (1) {$x_1$};
        \node[mainNode] (2) [right of=1] {};
        \node[mainNode] (3) [right of=2] {};
        \node[mainNode] (4) [right of=3] {};
        \node[mainNode] (5) [right of=4] {$x_2$};

        \path[every node/.style={font=\sffamily\small}]
        (1) edge [bend right] node {} (2)
        (2) edge [bend right] node {} (3)
        (3) edge [bend right] node {} (4)
        (4) edge [bend right] node {} (5)

        (5) edge [bend right] node {} (4)
        (4) edge [bend right] node {} (3)
        (3) edge [bend right] node {} (2)
        (2) edge [bend right] node {} (1)
        ;
    \end{tikzpicture}
    \caption{Example graph with seed nodes $x_1$, $x_2$.}
\end{subfigure}
\\[0.5cm]
\begin{subfigure}{\textwidth}
    \centering
    \begin{tikzpicture}
        [ 
          auto
        , ->
        , >=stealth'
        , shorten >=1pt
        , node distance=2cm
        ]
        \node[mainNode] (1) {$x_1$};
        \node[mainNode] (2) [right of=1] {};
        \node[mainNode] (3) [right of=2] {};
        \node[mainNode] (4) [right of=3] {};
        \node[mainNode] (5) [right of=4] {$x_2$};

        \path[every node/.style={font=\sffamily\small}]
        (1) edge [loop left]  node {} (1)
        (5) edge [loop right] node {} (5)

        (2) edge [bend right] node {} (3)
        (3) edge [bend right] node {} (4)
        (4) edge node {} (5)

        (4) edge [bend right] node {} (3)
        (3) edge [bend right] node {} (2)
        (2) edge node {} (1)
        ;
    \end{tikzpicture}
    \caption{Example graph with seed nodes $x_1$, $x_2$ after transformation.}
\end{subfigure}
\caption[Removal of outgoing edges of seed nodes in a graph]
{Remove outgoing edges and add self-loop for all seed nodes in an example graph. A random walk reaching $x_1$ or $x_2$ will stay there forever.}
\label{fig:modgraph}.
\end{figure}

Since we require the random walks to end as soon as they reach a seed node, we transform 
the undirected graph~$G$ into a directed graph $G'$ as follows: replace each undirected 
edge $\{u, v\}$ by arcs $(u, v)$ and $(v, u)$; then for each seed node, remove its 
outarcs and add a self-loop. This procedure is illustrated in Figure~\ref{fig:modgraph}.

Random walks in this graph can be modelled by an $n \times n$ transition matrix $\mat{P}$, with
\begin{equation}\label{eqn:defining_prob}
	\mat{P}(i, j) = \left \{ 
							\begin{array}{ll}
                                \frac{1}{\deg_{G'} (v_i)} & \mbox{ if } (v_i, v_j) \in E(G') \\
								0			& \mbox{ otherwise,}
							\end{array}
					\right.
\end{equation}
where $\deg_{G'} (v)$ is the degree of node $v$ in the directed graph~$G'$.
The entry $\mat{P}(i, j)$ represents the transition-probability from node $v_i$ to $v_j$.
Additionally, $\mat{P}^r (i, j)$ may be interpreted 
as the probability that a random walk starting at node~$v_i$ will end up at node~$v_j$ 
after~$r$ steps. 

Assume that the nodes of~$G'$ are labeled $u_1, \ldots, u_{n - s}, x_{1}, \ldots, x_{s}$, 
where $u_1, \ldots, u_{n - s}$ are the non-seed nodes and $x_1, \ldots, x_s$ are the seed nodes. 
We can now write the transition matrix $\mat{P}$ in the following canonical form:
\begin{equation}\label{eqn:canonical_form_P}
	\mat{P} = 	\left [ \begin{array}{ll}
						\mat{Q}  & \mat{R} \\
						 \mat{0}_{s \times (n - s)} & \mat{I}_{s \times s}
						\end{array}
				\right ],
\end{equation}
where~$\mat{Q}$ is the $(n - s) \times (n - s)$ sub-matrix that represents the transition 
from non-seed nodes to non-seed nodes; $\mat{R}$ is the $(n - s) \times s$ sub-matrix 
that represents the transition from non-seed nodes to seed nodes. The $s \times s$ identity 
matrix $\mat{I}$ represents the fact that once a seed node is reached, one cannot transition away 
from it. Here $\mat{0}_{s \times (n - s)}$ represents an $s \times (n - s)$ matrix of zeros. 
Since each row of $\mat{P}$ sums up to~$1$ and all entries are positive, this matrix is stochastic.

It is well-known that such a stochastic matrix represents what is known as an 
\emph{absorbing Markov chain} (see, for example, Chapter~11 of Grinstead and Snell~\cite{GS98}).  
A Markov chain is called absorbing if it satisfies two conditions: 
It must have at least one absorbing state~$i$, where state~$i$ is defined 
to be absorbing if and only if $\mat{P}(i,i) = 1$ and $\mat{P}(i, j) = 0$ 
for all $j \neq i$. Secondly, it must be possible to transition from every state to 
some absorbing state in a finite number of steps.
It follows directly from the construction of the graph $G'$ and the fact that the 
original graph was connected, that random walks in $G'$ define an absorbing Markov chain. 
Here, the absorbing states correspond to the set of seed nodes.

For any non-negative $r$, one can easily show that:
\begin{equation}\label{exp:rth_product_of_P}
	\mat{P}^r = \left [ \begin{array}{ll}
						\mat{Q}^r  					& \sum_{i = 0}^{r - 1}\mat{Q}^i \cdot \mat{R} \\
						 \mat{0}_{s \times (n - s)} & \mat{I}_{s \times s}
						\end{array}
				\right ].
\end{equation}  
Since we are dealing with infinite random walks, we are interested in the following 
property of absorbing Markov chains.
\begin{proposition}\label{prop:limiting_Q}
	Let $\mat{P}$ be the $n \times n$ transition matrix that defines an absorbing Markov chain
	and suppose that $\mat{P}$ is in the canonical form specified by equation~(\ref{eqn:canonical_form_P}). 
	Then
    \begin{equation}
        \lim_{r \to \infty} \mat{P}^r = \left [ \begin{array}{ll}
            \mat{0}_{(n - s) \times (n - s)} & (\mat{I} - \mat{Q})^{-1} \cdot \mat{R} \\
            \mat{0}_{s \times (n - s)}       & \mat{I}_{s \times s}
        \end{array}
        \right ].
    \end{equation}  
\end{proposition}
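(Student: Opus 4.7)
The plan is to exploit the block decomposition already recorded in equation~(\ref{exp:rth_product_of_P}): since
\[
    \mat{P}^r = \left[\begin{array}{ll} \mat{Q}^r & \sum_{i=0}^{r-1}\mat{Q}^i \cdot \mat{R} \\ \mat{0}_{s \times (n-s)} & \mat{I}_{s \times s} \end{array}\right],
\]
the proposition reduces to two claims: (i) $\mat{Q}^r \to \mat{0}$ as $r \to \infty$, and (ii) $\sum_{i=0}^{r-1}\mat{Q}^i \to (\mat{I}-\mat{Q})^{-1}$. The two lower blocks are already independent of $r$ and match the target.

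For (i), I would give $\mat{Q}^r$ a probabilistic reading: the entry $\mat{Q}^r(i,j)$ equals the probability that a random walk on $G'$ starting at the non-seed node $u_i$ sits at the non-seed node $u_j$ after exactly $r$ steps, i.e.\ without having been absorbed in between. The $i$-th row sum of $\mat{Q}^r$ is therefore the probability of still being in a transient state after $r$ steps. Because the chain is absorbing and the state space is finite, from every non-seed node there is a path to some seed node; taking the maximum length $T$ and the minimum path-probability $\epsilon>0$ over the finitely many non-seed nodes, the probability of being absorbed within $T$ steps is at least $\epsilon$ uniformly. Iterating over blocks of $T$ steps, the probability of not being absorbed after $kT$ steps is at most $(1-\epsilon)^k$, which tends to $0$; non-negativity of the entries then forces $\mat{Q}^r \to \mat{0}$ entrywise.

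Claim (ii) follows from the telescoping identity
\[
    (\mat{I}-\mat{Q})\sum_{i=0}^{r-1}\mat{Q}^i = \mat{I} - \mat{Q}^r,
\]
valid for every $r \geq 0$. As $r \to \infty$ the right-hand side tends to $\mat{I}$, so $\mat{I}-\mat{Q}$ must be invertible (any nonzero $v$ with $\mat{Q}v = v$ would satisfy $\mat{Q}^r v = v$ for all $r$, contradicting (i)). Multiplying through by $(\mat{I}-\mat{Q})^{-1}$ and then by $\mat{R}$ yields the upper-right block $(\mat{I}-\mat{Q})^{-1} \cdot \mat{R}$ required in the limit.

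The main obstacle is the uniform absorption bound behind (i); everything else is routine matrix manipulation on a Neumann series. The key point is that the absorbing-chain assumption together with finiteness of the state space provide a common $T$ and $\epsilon$ valid from every transient state, which is exactly what is needed for geometric decay of $\mat{Q}^r$ and hence for convergence of the partial sums.
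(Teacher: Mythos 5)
Your proof is correct. Note that the paper itself does not prove this proposition: it is invoked as a standard fact about absorbing Markov chains, with a pointer to Chapter~11 of Grinstead and Snell, and only the block identity for $\mat{P}^r$ is recorded. Your argument is precisely the classical textbook proof being referenced there --- the uniform absorption bound (a common horizon $T$ and probability $\epsilon>0$ over the finitely many transient states, giving $(1-\epsilon)^k$ decay and hence $\mat{Q}^r \to \mat{0}$), followed by the telescoping identity $(\mat{I}-\mat{Q})\sum_{i=0}^{r-1}\mat{Q}^i = \mat{I}-\mat{Q}^r$, which simultaneously yields invertibility of $\mat{I}-\mat{Q}$ and convergence of the Neumann series --- so it correctly and completely supplies the proof the paper leaves implicit.
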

Intuitively, every random walk starting at a non-seed node eventually 
reaches some seed node where it is ``absorbed.'' The probability 
that such an infinite random walk starting at non-seed node~$u_i$ ends 
up at the seed node~$x_j$ is entry $(i, j)$ of the 
submatrix $\mat{X} := (\mat{I} - \mat{Q})^{-1} \cdot \mat{R}$. 

Now we can finally define the affinity vectors of non-seed nodes.
The affinity of non-seed node $u_i$ to a community~$l$ is defined as:
\begin{equation}\label{eqn:belonging_vector}
    \alpha (u_i, l) = \sum_{j = 1}^{s} \mat{X} (i, j) \cdot \alpha (x_j, l).
\end{equation}
The computational complexity of calculating these affinity values  
depends on how efficiently we can calculate the entries of $\mat{X}$, 
i.e., solve $(\mat{I} - \mat{Q})^{-1}$. In the next subsection, we show 
how to reduce this problem to that of solving a system of linear equations 
of a special type which takes time $O(m \cdot \log n)$, where $m$
is the number of edges in $G$.

\subsection{Symmetric Diagonally Dominant Linear Systems}

An $n \times n$ matrix $\mat{A} = [a_{ij}]$ is \emph{diagonally dominant} if 
\[
	|a_{ii}| \geq \sum_{j \neq i} {|a_{ij}|} \mbox{ for all } i = 1, \ldots, n.
\] 
A matrix is \emph{symmetric diagonally dominant (SDD)} if, in addition to the above, 
it is symmetric. For more information about matrices and matrix computations, 
see the textbooks by Golub and Van Loan~\cite{GvL13} and Horn and Johnson~\cite{HJ13}. 

An example of a symmetric, diagonally dominant matrix is the graph Laplacian. 
Given an unweighted, undirected graph~$G$, the \emph{Laplacian} of $G$ 
is defined to be 
\[
\mat{L}_G = \mat{D}_G - \mat{A}_G,
\] 
where $\mat{A}_G$ is the adjacency matrix of the graph~$G$ and $\mat{D}_G$ 
is the diagonal matrix of vertex degrees. 


A symmetric, diagonally dominant (SDD) system of linear equations is a system of 
equations of the form:
\[
	\mat{A} \cdot \vect{x} = \vect{b},
\]
where $\mat{A}$ is an SDD matrix, $\vect{x} = \trans{(x_1, \ldots, x_n)}$ 
is a vector of unknowns, and $\vect{b} = \trans{(b_1, \ldots, b_n)}$ is a vector of constants. 
There is near-linear time algorithm for solving such a system of linear equations 
and this result is crucial to the analysis of the running time of our algorithm. 

The solution of $n \times n$ system of linear equations takes $O(n^3)$ time 
if one uses Gaussian elimination. Spielman and Teng made a seminal contribution in this direction and 
showed that SDD linear systems can be solved in nearly-linear 
time~\cite{ST04,EEST05,ST08}. Spielman and Teng's algorithm (the ST-solver)
iteratively produces a sequence of approximate solutions which converge to the 
actual solution of the system $\mat{A} \vect{x} = \vect{b}$. The performance 
of such an iterative system is measured in terms of the time taken to reduce 
an appropriately defined approximation error by a constant factor. The time 
complexity of the ST-solver was reported to be at least $O(m \log^{15} n)$~\cite{KMP11}.  
Koutis, Miller and Peng~\cite{KMP10,KMP11} developed a simpler and faster algorithm 
for finding $\epsilon$-approximate solutions to SDD systems in time 
$\tilde{O}(m \log n \log (1/\epsilon) )$, where the $\tilde{O}$ notation hides 
a factor that is at most $(\log \log n)^2$. A highly readable account 
of SDD systems is the monograph by Vishnoi~\cite{Vis13}. We summarize the 
main result that we use as a black-box.  
\begin{proposition} \label{prop:SDD_systems} {{\rm \cite{KMP11,Vis13}}}
	Given a system of linear equations $\mat{A} \vect{x} = \vect{b}$, where $\mat{A}$
	is an SDD matrix, there exists an algorithm to compute $\tilde{\vect{x}}$  
	such that:
		\[
			\norm{\tilde{\vect{x}} - \vect{x}}_{\mat{A}} \leq \epsilon \norm{\vect{x}}_{\mat{A}}, 
		\]
	where $\norm{\vect{y}}_{\mat{A}} := \sqrt{\trans{\vect{y}} \mat{A} \vect{y}}$. The algorithm runs in 
	time $\tilde{O}(m \cdot \log n \cdot \log (1 / \epsilon) )$ time, where $m$ is the number of non-zero 
	entries in $\mat{A}$. The $\tilde{O}$ notation hides a factor of at most $(\log \log n)^2$.
\end{proposition}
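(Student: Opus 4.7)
Since Proposition~\ref{prop:SDD_systems} is a deep result from the Spielman--Teng line of work, my sketch is at the level of the overall architecture of the argument; I would invoke the detailed combinatorial and spectral sub-lemmas from the cited papers as black boxes. The plan is to solve $\mat{A} \vect{x} = \vect{b}$ by a preconditioned iterative method. First I would reduce to the Laplacian case: any SDD system on an $n \times n$ matrix with $m$ nonzeros is equivalent, via a standard linear-time transformation, to a Laplacian system $\mat{L}_G \vect{y} = \vect{c}$ on a graph $G$ with at most $2n$ vertices and $O(m)$ edges. From here on it suffices to solve Laplacian systems in nearly-linear time.

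The outer layer of the algorithm is preconditioned Chebyshev iteration against $\mat{L}_G$. This reduces the problem to $O(\sqrt{\kappa} \log(1/\epsilon))$ outer iterations, each of which costs one $\mat{L}_G$-vector product (time $O(m)$) plus one application of $\mat{B}^{-1}$, where $\mat{B}$ is a preconditioner with relative condition number $\kappa := \kappa(\mat{B}^{-1} \mat{L}_G)$. The rest of the work is the construction of a preconditioner $\mat{B}$ that both spectrally approximates $\mat{L}_G$ and admits a recursive solver of its own. Correctness of the iteration in the $\mat{A}$-norm specified by the statement follows from the standard convergence analysis of Chebyshev semi-iteration, since $\mat{A}$ is positive semi-definite in the Laplacian case (and one handles the null vector~$\mathbf{1}$ by restricting to its orthogonal complement).

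For the preconditioner I would use an ultrasparsifier. Starting from a low-stretch spanning tree $T$ of $G$ with total stretch $O(m \log n \log \log n)$ (Abraham--Neiman), I would sample the non-tree edges with probability proportional to stretch, keeping an expected $\tilde{O}(m/k^2)$ of them, to obtain a graph $H = T \cup S$ satisfying $\mat{L}_H \preceq \mat{L}_G \preceq k^2 \cdot \mat{L}_H$ with high probability. Because $H$ is a tree plus few off-tree edges, a system in $\mat{L}_H$ reduces, via elimination of degree-$1$ and degree-$2$ tree vertices, to a substantially smaller SDD system that is handled by recursion. Balancing $k$ against the recursion depth yields a recurrence whose solution is $\tilde{O}(m \log n \log(1/\epsilon))$, with the $(\log \log n)^2$ factor accounting for the combined overhead of the tree construction and the recursion.

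The hard part is the spectral concentration step: one must show that independent stretch-proportional Bernoulli sampling really yields a good spectral sparsifier of $\mat{L}_G$. My plan here is to apply the matrix Chernoff bound of Tropp to the rescaled rank-one edge contributions $\mat{L}_e$, using the identity $\sum_{e \notin T} \mathrm{str}_T(e) = \mathrm{tr}(\mat{L}_T^{+} \mat{L}_G) - (n-1)$ to control the variance of the sample. Tuning the parameters so that the recursion does not accumulate extra $\log n$ factors is the main bookkeeping challenge; this entire story is worked out quantitatively in~\cite{ST04,KMP10,KMP11,Vis13}, which I would rely on for the exact constants and for the promotion from Laplacian solvers back to general SDD solvers.
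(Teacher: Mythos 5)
The paper does not prove Proposition~\ref{prop:SDD_systems} at all: it is imported verbatim as a black box from~\cite{KMP11,Vis13}, so there is no internal argument to compare yours against. Your sketch is a faithful high-level outline of the proof in those references --- Gremban-style reduction from SDD to Laplacian systems, preconditioned Chebyshev iteration, ultrasparsifier/incremental-sparsifier preconditioners built from a low-stretch spanning tree with stretch-proportional sampling of off-tree edges, matrix Chernoff concentration, and a recursion obtained by eliminating degree-one and degree-two tree vertices, with the $(\log\log n)^2$ overhead traceable to the stretch bound and the recursion bookkeeping. Since you, like the paper, ultimately defer every quantitative sub-lemma (the exact ultrasparsifier parameters, the tolerance of the Chebyshev outer loop to inexact recursive solves, and the balancing of the preconditioner chain) to~\cite{ST04,KMP10,KMP11,Vis13}, your proposal is best read as an annotated citation rather than an independent proof; that is entirely consistent with how the paper uses the result, and the architecture you describe is the right one, so there is no gap to flag beyond noting that the delicate step you gloss over --- showing the recursive, only approximately solved preconditioner still yields a linear convergence guarantee in the $\mat{A}$-norm --- is precisely where the cited papers do their hardest work.
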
 


We can use Proposition~\ref{prop:SDD_systems} to upper-bound the time taken to solve
the linear systems, which are needed to calculate the affinity vectors defined in (\ref{eqn:belonging_vector}).

\begin{theorem}\label{theorem:computing_NR}
Given a graph~$G$, let $\mat{P}$ be the $n \times n$ transition matrix 
defined by equation~(\ref{eqn:defining_prob}) in canonical form 
(see equation~(\ref{eqn:canonical_form_P})). Then, one can compute 
the affinity vectors of all non-seed nodes in time $O(m \cdot \log n)$ per community, 
where~$m$ is the number of edges in the graph~$G$.
\end{theorem}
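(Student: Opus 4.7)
The plan is to reduce the calculation of the affinity vectors, for each community, to a single solve of a linear system whose matrix is symmetric and diagonally dominant, and then apply Proposition~\ref{prop:SDD_systems}. Fix a community~$l$ and let $\vect{b}_l = \trans{(\alpha(x_1, l), \ldots, \alpha(x_s, l))}$ be the vector of user-supplied seed affinities. By equation~(\ref{eqn:belonging_vector}), the affinities of the non-seed nodes for community~$l$ are exactly the entries of $\mat{X}\vect{b}_l = (\mat{I} - \mat{Q})^{-1} \mat{R} \vect{b}_l$. Thus it suffices to compute $\vect{c}_l := \mat{R}\vect{b}_l$, which takes $O(m)$ time since $\mat{R}$ has at most $O(m)$ non-zero entries, and then to solve the $(n-s) \times (n-s)$ linear system
\begin{equation}\label{eqn:nonsymm_system}
	(\mat{I} - \mat{Q})\,\vect{y}_l = \vect{c}_l.
\end{equation}

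The main obstacle is that $\mat{I} - \mat{Q}$ is not symmetric (each row of $\mat{Q}$ is normalized by a degree, and different non-seed nodes generally have different degrees), so Proposition~\ref{prop:SDD_systems} does not apply directly. Here I exploit the structure of the random walk: from the definition in~(\ref{eqn:defining_prob}) and the fact that $\deg_{G'}(u_i) = \deg_G(u_i)$ for every non-seed node $u_i$, one has $\mat{Q} = \mat{D}_u^{-1}\mat{A}_{uu}$, where $\mat{D}_u$ is the diagonal matrix of degrees (in~$G$) of non-seed nodes and $\mat{A}_{uu}$ is the principal submatrix of the adjacency matrix of~$G$ on the non-seed rows and columns. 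Consequently $\mat{I} - \mat{Q} = \mat{D}_u^{-1}(\mat{D}_u - \mat{A}_{uu})$, and multiplying (\ref{eqn:nonsymm_system}) on the left by~$\mat{D}_u$ produces the equivalent system
\begin{equation}\label{eqn:sdd_system}
	(\mat{D}_u - \mat{A}_{uu})\,\vect{y}_l = \mat{D}_u \vect{c}_l.
\end{equation}

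The matrix $\mat{M} := \mat{D}_u - \mat{A}_{uu}$ is symmetric, and it is diagonally dominant because each diagonal entry equals $\deg_G(u_i)$ while the corresponding row of $\mat{A}_{uu}$ has at most $\deg_G(u_i)$ ones (with equality only when $u_i$ has no seed neighbour); in fact $\mat{M}$ is exactly the principal submatrix of the graph Laplacian $\mat{L}_G$ indexed by the non-seed nodes. The number of non-zero entries of $\mat{M}$ is bounded by $2m + (n - s) = O(m)$, and the right-hand side $\mat{D}_u \vect{c}_l$ is computable in $O(n)$ time. Applying Proposition~\ref{prop:SDD_systems} to~(\ref{eqn:sdd_system}) yields a solution~$\vect{y}_l$ in $O(m \cdot \log n)$ time (up to the $\tilde O$ factor and a $\log(1/\epsilon)$ factor that we absorb by choosing $\epsilon$ inverse-polynomial in~$n$), whose entries are precisely the affinities $\alpha(u_i, l)$ sought. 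Repeating this procedure for each of the $k$ communities yields the claimed bound of $O(m \cdot \log n)$ time per community, completing the proof.
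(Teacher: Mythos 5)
Your proposal is correct and follows essentially the same route as the paper's own proof: you write $\mat{I}-\mat{Q}=\mat{D}_u^{-1}(\mat{D}_u-\mat{A}_{uu})$ with degrees taken in $G$, symmetrize by multiplying through by the degree matrix, and invoke Proposition~\ref{prop:SDD_systems} on the resulting SDD system $(\mat{D}_u-\mat{A}_{uu})\vect{y}_l=\mat{D}_u\mat{R}\vect{b}_l$, which is exactly the paper's equation~(\ref{eqn:final_affinity}) with $\mat{D}_1=\mat{D}_u$ and $\mat{A}_1=\mat{A}_{uu}$. Your added remarks (identifying the matrix as a principal submatrix of the Laplacian and absorbing the $\log(1/\epsilon)$ factor) only make explicit details the paper leaves implicit.
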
  
\begin{proof}
Recall that we ordered the nodes of $G$ as $u_1, \ldots, u_{n - s}, x_1, \ldots, x_s$, 
where $u_1, \ldots, u_{n - s}$ denote the non-seed nodes and $x_1, \ldots, x_s$ denote 
seed nodes. Define $G_1 := G[u_1, \ldots, u_{n - s}]$, the subgraph induced by the non-seed nodes 
of~$G$. Let $\mat{A}_1$ denote the adjacency matrix of the graph $G_1$; let 
$\mat{D}_1$ denote the $(n - s) \times (n - s)$ diagonal matrix satisfying 
$\mat{D}_1(u_i, u_i) = \deg_{G}(u_i)$ for all $1 \leq i \leq n - s$.  That is, the 
entries of $\mat{D}_1$ are not the degrees of the vertices in the induced subgraph~$G_1$ 
but in the graph~$G$. We can then express 
$\mat{I} - \mat{Q}$ as 
\begin{equation} \label{eqn:I-Q}
	\mat{I}  - \mat{Q} = \inv{\mat{D}_1} (\mat{D}_1 - \mat{A}_1).
\end{equation}
Note that $\mat{D}_1 - \mat{A}_1$ is a symmetric and diagonally dominant matrix. 
Let us suppose that $\mat{X}$ is an $(n - s) \times s$ matrix such that 
\[
	\mat{X} = (\mat{I} - \mat{Q})^{-1} \cdot \mat{R}.
\]

Fix a community~$l$. Then the affinities of the non-seed nodes 
for community~$l$ may be written as:
\begin{align} \label{eqn:affinity}
	\left ( \begin{array}{c}
		\alpha (u_1, l) \\
		\vdots			\\
		\alpha (u_{n - s}, l)
	\end{array}	\right ) & = \sum_{j = 1}^{s} \alpha (x_j, l) \cdot \mat{X}_j \nonumber \\ 
						 & = \sum_{j = 1}^{s} \alpha (x_j, l) \inv{(\mat{I} - \mat{Q})} \cdot \mat{R}_j \nonumber \\ 
						 & = \inv{(\mat{I} - \mat{Q})} \cdot \sum_{j = 1}^{s} \alpha (x_j, l) \cdot \mat{R}_j,
\end{align}
where $\mat{X}_j$ and $\mat{R}_j$ denote the $j\th$ columns of $\mat{X}$ and $\mat{R}$, respectively. 
Using equation~(\ref{eqn:I-Q}), we may rewrite equation~(\ref{eqn:affinity}) as:
\begin{align}
	\inv{\mat{D}_1} (\mat{D}_1 - \mat{A}_1) \cdot \left ( \begin{array}{c}
		\alpha (u_1, l) \\
		\vdots			\\
		\alpha (u_{n - s}, l)
	\end{array}	\right ) & = \sum_{j = 1}^{s} \alpha (x_j, l) \cdot \mat{R}_j.
\end{align}
Finally, multiplying by $\mat{D}_1$ on both sides, we obtain
\begin{equation}\label{eqn:final_affinity}
	(\mat{D}_1 - \mat{A}_1) \cdot \vect{\alpha}_l = \mat{D}_1 \cdot \sum_{j = 1}^{s} \alpha (x_j, l) \cdot \mat{R}_j,
\end{equation}
where we used $\vect{\alpha}_l$ to denote the vector $\trans{\left ( \alpha (u_1, l), \ldots, \alpha (u_{n-s}, l) \right )}$.

Note that computing $\sum_{j = 1}^{s} \alpha (x_j, l) \cdot \mat{R}_j$ takes time $O(\tilde{m})$, where $\tilde{m}$ 
denotes the number of non-zero entries\footnote{This is almost the same as the 
number~$m$ of edges in $G$, but not quite, since while constructing $\mat{P}$ from the graph $G$, 
we add self-loops on seed nodes and delete edges between adjacent seed nodes, if any. However what is true is 
that $\tilde{m} \leq m + s \leq m + n$.} in \mat{P}. 
Computing the product of $\mat{D}_1$ and $\sum_{j = 1}^{s} \alpha (x_j, l) \cdot \mat{R}_j$ 
takes time $O(\tilde{m})$ so that the right hand side of equation~(\ref{eqn:final_affinity}) can 
be computed in time $O(\tilde{m})$. We now have a symmetric diagonally dominant system of linear equations 
which by Proposition~\ref{prop:SDD_systems} can be solved in time $O(\tilde{m} \cdot \log n)$. Therefore,
the time taken to compute the affinity to a fixed community is $O(\tilde{m} \cdot \log n) = O(m \log n)$,
which is what was claimed. Since we assume our networks to be sparse, $m = O(n)$, and 
the time taken is $O(n \cdot \log n)$ per community.  
%
\end{proof}

\section{Experimental Setup} \label{sec:experiment_setup}
\begin{figure}
    \centering
    \begin{tikzpicture}[node distance=1cm, auto]  
        \tikzset{
            mynode/.style={
                rectangle,
                rounded corners,
                draw=black, 
                very thick, 
                inner sep=1em, 
                minimum size=3em, 
                text centered
            },
            myarrow/.style={
                ->, 
                >=latex', 
                shorten >=1pt, 
                thick
            },
            mylabel/.style={
                text width=7em,
                text centered
            },
            bendedarrow/.style={
                to path={(\tikztostart) -- ++(#1,0pt) \tikztonodes |- (\tikztotarget) },
                pos=0.5
            }
        }  
        

        \node[mynode]                  (lfr)            {LFR};  
        \node[mynode, below=of lfr]    (seeds)          {Seed Generation};  
        \node[mynode, below=of seeds]  (walk)           {Random Walk};  
        \node[mynode, below=of walk]   (classification) {Classification};  

        \coordinate (middle) at ($(seeds)!0.5!(walk)$);
        \node[mynode, right=of middle, xshift=2cm] (nmi) {NMI};  


        \draw[myarrow] (lfr)   -- (seeds);	
        \draw[myarrow] (seeds) -- (walk);	
        \draw[myarrow] (walk)  -- (classification);	

        \draw[myarrow] (lfr)              -| (nmi);	
        \draw[myarrow] (classification)   -| (nmi);	

        \node[right=1cm of nmi] (dummy1) {}; 
        \draw[myarrow] (nmi)   -- (dummy1);	

        \draw [myarrow, dashed, bendedarrow=-2.5cm] (classification) to (seeds);
        \node[left=1cm of walk] (dummy2) {}; 
        \node[mylabel, rotate=90] at (dummy2) (label1) {Iteration};  

    \end{tikzpicture} 
    \caption{Pipeline}
    \label{fig:pipeline}
\end{figure}

Our experimental setup consists of five parts (see Figure~\ref{fig:pipeline}) but the 
respective parts differ slightly depending on whether we test non-overlapping or overlapping 
communities. We use the LFR benchmark graph generator developed by Lancichinetti, 
Fortunato, and Radicchi~\cite{LFR08, LF09}, which outputs graphs where the community 
information of each node is known. From each community in the graph thus generated, 
we pick a fixed number of seed nodes per community and give these as input to our algorithm. 
Once the algorithm outputs the affinities of all non-seed nodes, we classify them into 
communities and finally compare the output with the ground truth using 
normalized mutual information (NMI) as a metric~\cite{DDDA05}. We implemented 
our algorithm in \CPP\ and Python and the code is available online.\footnote{At 
\texttt{https://github.com/somnath1077/CommunityDetection}}

\paragraph{LFR.}
The LFR benchmark was designed by Lancichinetti, Fortunato and Radicci~\cite{LFR08}
generates random graphs with community structure. The intention was to establish a 
standard benchmark suite for community detection algorithms. Using this benchmark they 
did a comparative analysis of several well-known algorithms
for community detection~\cite{LF09}. To the best of our knowledge, this study seems to be the 
first where standardized tests were carried out on such a range of community detection algorithms. 
Subsequently, there has been another comprehensive study on overlapping community detection 
algorithms~\cite{XKS13} which also uses (among others) the LFR benchmark. As such, we chose this 
benchmark for our experiments and set the parameters in the same fashion as in~\cite{LF09}. 

We briefly describe the major parameters that the user has to supply 
for generating benchmark graphs in the LFR suite. The node degrees and the 
community sizes are distributed according to power law, with different exponents. 
An important parameter is the \emph{mixing parameter~$\mu$} which is the fraction of neighbors 
of a node that do not belong to any community that the node belongs to, averaged over all nodes.
The other parameters include maximum node degree, average node degree, 
minimum and maximum community sizes. For generating networks with overlapping communities, 
one can specify what fraction of nodes are present in multiple communities.

In what follows, we describe tests for non-overlapping and overlapping communities separately, since 
there are several small differences in out setup for these two cases. 

\subsection{Non-overlapping communities}
The networks we test have either 1000 nodes or 5000 nodes. The average node degree
was set at 20 and the maximum node degree set at 50. The parameter controlling the 
distribution of node degrees was set at~2 and that for the community size distribution was 
set at~1. Moreover, we distinguished between big and small communities: small communities have 
10--50 nodes and large communities have 20--100 nodes. 
For each of the four combinations of network and community size, we generated graphs with the 
above parameters and with varying mixing parameters. For each of these graphs, we tested the 
community information output by our algorithm and compared it against the ground truth 
using the normalized mutual information as a metric. The plots in the next section 
show how the performance varies as the mixing parameter was changed. Each data point in 
these plots is the average over 100 iterations using the same parameters. 

\paragraph{Seed node generation.} 
To use our algorithm, we expect that users pick seed nodes from 
every community that they wish to identify in the network. 
We simulate this by picking a fixed fraction of nodes from each community as seed nodes.
One of our assumptions is that the user knows the more important members of each community. 
To replicate this phenomenon in our experiments, we picked a node as seed node
with a probability that is proportional to its degree.
That is, nodes with a higher degree were picked in preference to those with a lower degree.
For those nodes which were picked as seed nodes, we set the affinity to a community to be 1 if 
and only if the node belongs to that community and 0~otherwise.

\paragraph{Classification into communities.}
The input to the algorithm consists of the network, the set of seed nodes together with their 
affinities. Once the algorithm calculates the affinities of all non-seed nodes, we classify 
them into their respective communities. This is quite easy for non-overlapping 
communities where we simply assign each node to the community to which it has the 
highest affinity, breaking ties arbitrarily.

\paragraph{Iteration.}
We extended the algorithm to iteratively improve the goodness of the detected communities.
The idea is that after running the algorithm once, there are certain nodes which can be classified 
into their communities with a high degree of certitude. We add these nodes to the seed node 
set of the respective community and iterate the procedure. To be precise, in the $j\th$ round, 
let $C^j_A$ be the set of nodes that were classified as community $A$ and $S^j_A$ 
be the seed nodes of community $A$. We create $S^{j+1}_A$ as follows: For a fixed $\varepsilon > 0$, 
choose $\varepsilon \cdot |C^j_A|$ nodes of $C^j_A$ that have the highest affinity to community $A$, 
and add them to $S^j_A$ to obtain $S^{j + 1}_A$. 
The factor $\varepsilon$ declares by how much the set of seed nodes is allowed to grow in each iteration. 
Choosing $\varepsilon = 0.1$ gives good results. Repeating this procedure several times significantly 
improves the quality of the communities detected as measured by the NMI. Each iteration takes 
$O(k \cdot m \cdot \log n)$ time and hence the cost of running the iterative algorithm is 
the number of iterations times the cost of running it once. 

\subsection{Overlapping Communities.}
The LFR benchmark suite can generate networks with an overlapping community structure. 
In addition to the parameters mentioned for the non-overlapping case, there is an additional 
parameter that controls what fraction of nodes of the network are in multiple communities. 
As in the non-overlapping case, we generated graphs with 1000 and 5000 nodes with the average
node degree set at 20 and maximum node degree set at 50. We generated graphs with two types 
of community sizes: small communities with 10--50 nodes and large communities with 20--100 nodes.
Moreover, as in~\cite{LF09}, we chose two values for the mixing factor: $0.1$ and $0.3$ 
and we plot the quality of the community structure output by the algorithm 
(measured by the NMI) against the fraction of overlapping nodes in the network.

\paragraph{Seed Generation.}
As in the case for non-overlapping communities, we experimented with a non-iterative 
and an iterative version of our approach. For the non-iterative version, the percentage 
of seed nodes that we picked were 5, 10, 15 and 20$\%$ per community, with the probability
of picking a node being proportional to its degree. For the iterative version, we used 
2, 4, 6, 8 and 10$\%$ seed nodes per community. 

\paragraph{Classification into communities.} For the overlapping case, we cannot
use the naive strategy of classifying a node to a community to which it has 
maximum affinity, since we do not even know the \emph{number} of communities a node belongs to. 
We need a way to infer this information from a node's affinity vector.

For each node, we expect the algorithm to assign high affinities to the communities 
it belongs to and lower affinities to the communities it does not belong to. 
We tried assigning a node to all communities to which it has an affinity that 
exceeds a certain threshold. This, however, did not give good results.
The following strategy worked better. 

Sort the affinities of a node in descending order and let this 
sequence be $a_1, \ldots, a_k$. Calculate the differences 
$\Delta_{1}, \ldots, \Delta_{k-1}$ with $\Delta_{j-1} := a_{j - 1} - a_j$;
let $\Delta_{\mathrm{max}}$ denote the maximum difference 
and let $i$ be the smallest index for which $\Delta_i = \Delta_{\mathrm{max}}$. We then associate 
the node with the communities to which it has the affinities $a_1, \ldots, a_i$. 
The intuition is that, while the node can have a varying affinity to the communities it belongs to, 
there is likely to be a sharp decrease in affinities for the communities that the node does 
not belong to. This is what is captured by computing the difference in affinities and then finding 
out where the first big drop in affinities occurs.

\paragraph{Iteration.}
For overlapping communities, we need to extend our strategy for iteratively improving the quality 
of the communities found. As in the non-overlapping case, after $j$ rounds, we increase the 
size of the seed node set of community~$A$ by a factor~$\varepsilon$ by adding those nodes 
which were classified to be in community~$A$ and have the highest affinity to this community. 
Let $v$ be a such a node. The classification strategy explained above might have classified~$v$ 
to be in multiple communities, say, $A_1, \dots, A_l$. In this case, we assign $v$ 
to be a seed node for communities $A, A_1, \ldots, A_l$. The running time is the number of 
iterations times the cost of running the algorithm once.

\subsection{Normalized Mutual Information}
This is an information-theoretic measure that allows us the 
compare the ``distance'' between two partitions of a finite set. Let $V$ be a finite set 
with $n$ elements and let $\mathcal{A}$ and $\mathcal{B}$ be two partitions of $V$. The probability that an 
element chosen uniformly at random belongs to a partite set $A \in \mathcal{A}$ is $n_A/n$, where $n_A$ 
is the number of elements in $A$. The Shannon entropy of the partition $\mathcal{A}$ 
is defined as:
\begin{equation}\label{eqn:shannon_entropy}
H(\mathcal{A}) = - \sum_{A \in \mathcal{A}} \frac{n_A}{n} \log_2 \frac{n_A}{n}.
\end{equation}

The mutual information of two random variables is a measure of their mutual dependence. For random 
variables $X$ and $Y$ with probability mass functions $p(x)$ and $p(y)$, respectively, and 
with a joint probability mass function $p(x, y)$, the \emph{mutual information $I(X, Y)$} 
is defined as:
\begin{equation}\label{eqn:mutual_information_rv}
I(X, Y) = \sum_{x \in \Omega(X)} \sum_{y \in \Omega(Y)} p(x, y) \log \frac{p(x, y)}{p(x) p(y)},
\end{equation}
where $\Omega(X)$ is the event space of the random variable $X$.
The mutual information of two partitions $\mathcal{A}$ and $\mathcal{B}$ 
of the node set of a graph is calculated by using the so-called ``confusion matrix'' 
$\mat{N}$ whose rows correspond to ``real'' communities and whose columns correspond 
to ``found'' communities. The entry $\mat{N}(A, B)$ is the number of nodes of community 
$A$ in partition $\mathcal{A}$ that are classified into community $B$ in partition $\mathcal{B}$. 
The mutual information is defined as:
\begin{equation}\label{eqn:mutual_information_graphs}
I(\mathcal{A}, \mathcal{B}) = 
	\sum_{A \in \mathcal{A}} \sum_{B \in \mathcal{B}} \frac{n_{A, B}}{n} 
		\log \frac{n_{A, B} / n}{ (n_A / n) \cdot (n_B / n) }.  
\end{equation}

Danon \etal~\cite{DDDA05} suggested to use a normalized variant of this measure. The 
normalized mutual information $I_N(\mathcal{A}, \mathcal{B})$ between partitions 
$\mathcal{A}$ and $\mathcal{B}$ is defined as:
\begin{equation} \label{eqn:normalized_mutual_information}
I_N(\mathcal{A}, \mathcal{B}) =  \frac{2 I(\mathcal{A}, \mathcal{B})}{H(\mathcal{A}) + H(\mathcal{B})}.
\end{equation}
The normalized mutual information takes the value~1 when both partitions are identical. If both partitions 
are independent of each other, then $I_N(\mathcal{A}, \mathcal{B}) = 0$. 

The classical notion of normalized mutual information measures the distance  between 
two \emph{partitions} and hence cannot be used for overlapping community detection. 
Lancichinetti, Fortunato, and Kert\'{e}sz~\cite{LFK09} proposed a definition of the measure for 
evaluating the similarity of covers, where a \emph{cover} of the node set of a graph 
is a collection of node subsets such that every node of the graph is in at least one set. 
Their definition of normalized mutual information is:
\begin{equation} \label{eqn:nmi_LFK}
\NMI_{\mathrm{LFK}} := 1 - \frac{1}{2} 
		\left ( \frac{H(\mathcal{A} | \mathcal{B})}{H(\mathcal{A})} + \frac{H(\mathcal{B}
				| \mathcal{A})}{H(\mathcal{B})}\right ).
\end{equation}
This definition is not exactly an extension of normalized mutual information in that the values
obtained by evaluating it on two partitions is different from what is given by normalized mutual 
information evaluated on the same pair of partitions. However in this paper we use this definition 
of NMI to evaluate the quality of the overlapping communities discovered by our algorithm. 

We note that McDaid \etal~\cite{MGH11} have extended the definition of normalized mutual 
information to covers and that for partitions, their definition corresponds to the usual definition of NMI.

\section{Experimental Results}\label{sec:experiment_results}
\newcommand{\plotwidth}{0.63\linewidth}
\newcommand{\cfinderwidth}{0.96\linewidth}
\newcommand{\otherplotswidth}{0.76\linewidth}

As in the last section, we first discuss our results for the non-overlapping case followed by 
the ones for the overlapping case.

\subsection{Non-overlapping communities}
Figures~\ref{fig:no_iter_no_overlap}, \ref{fig:iter_no_overlap}, and~\ref{fig:compare_iter_no_overlap}
show the plots that we obtained for non-overlapping communities. Figure~\ref{fig:no_iter_no_overlap}
shows tests for the non-iterative method of our algorithm with 5, 10, 15, and 20$\%$ seed nodes per 
community. 

The first observation here is that anything less than 10$\%$ seed nodes per community 
do not give good results. With a seed node percentage of 10$\%$ or more and 
a mixing factor of at most~$0.4$ we achieve an NMI above $0.9$ and can compete with \textit{Infomap}, 
which was deemed to be one the best performing algorithms on the LFR benchmark~\cite{LF09}. 
Above a mixing factor of $0.4$, our algorithm has a worse performance than \textit{Infomap} 
which, curiously enough, achieves an NMI of around 1 till a mixing factor of around 
$0.6$ after which its performance drops steeply. The drop in the performance of our algorithm 
begins earlier but is not as steep. See Figure~\ref{fig:Infomap_etal} for the performance 
of Infomap and other algorithms that were studied in~\cite{LF09}. 

Figure~\ref{fig:iter_no_overlap} shows the results for the iterative approach of 
our algorithm in the non-overlapping case. When compared with the non-iterative approach, 
one can see that even after ten iterations there is a significant improvement in 
performance (See Figure~\ref{fig:compare_iter_no_overlap}). As can be seen, typically 
with 6$\%$ seed nodes per community we obtain acceptable performance (an NMI value of 
over $0.9$ with the mixing factor of up to $0.5$).

\begin{figure}[h!]
    \centering
    \begin{subfigure}{0.5\textwidth}
    \centering
    \includegraphics[width=\plotwidth]{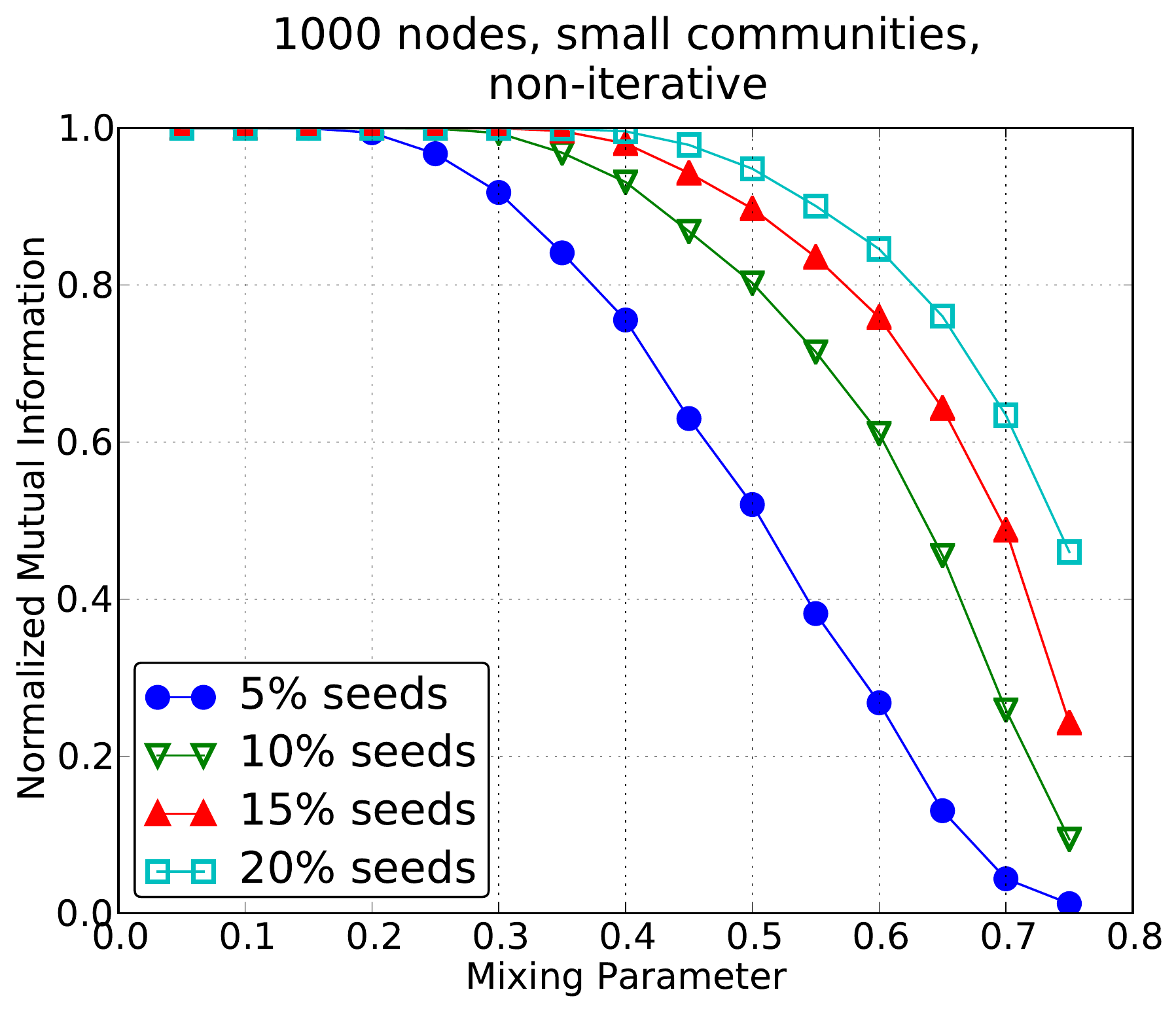}
    \end{subfigure}%
    \begin{subfigure}{0.5\textwidth}
    \centering
    \includegraphics[width=\plotwidth]{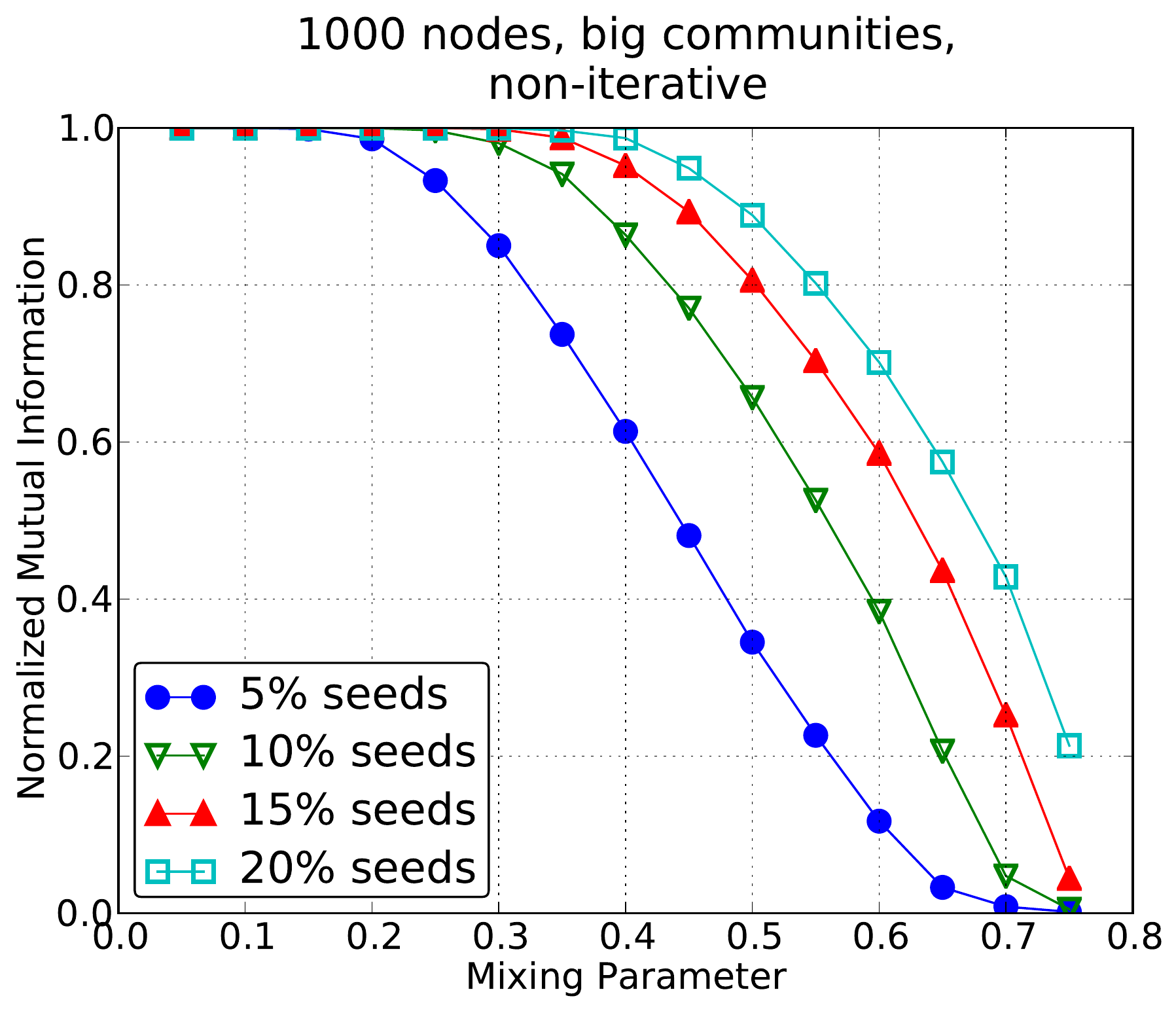}
    \end{subfigure}
    \begin{subfigure}{0.5\textwidth}
    \centering
    \includegraphics[width=\plotwidth]{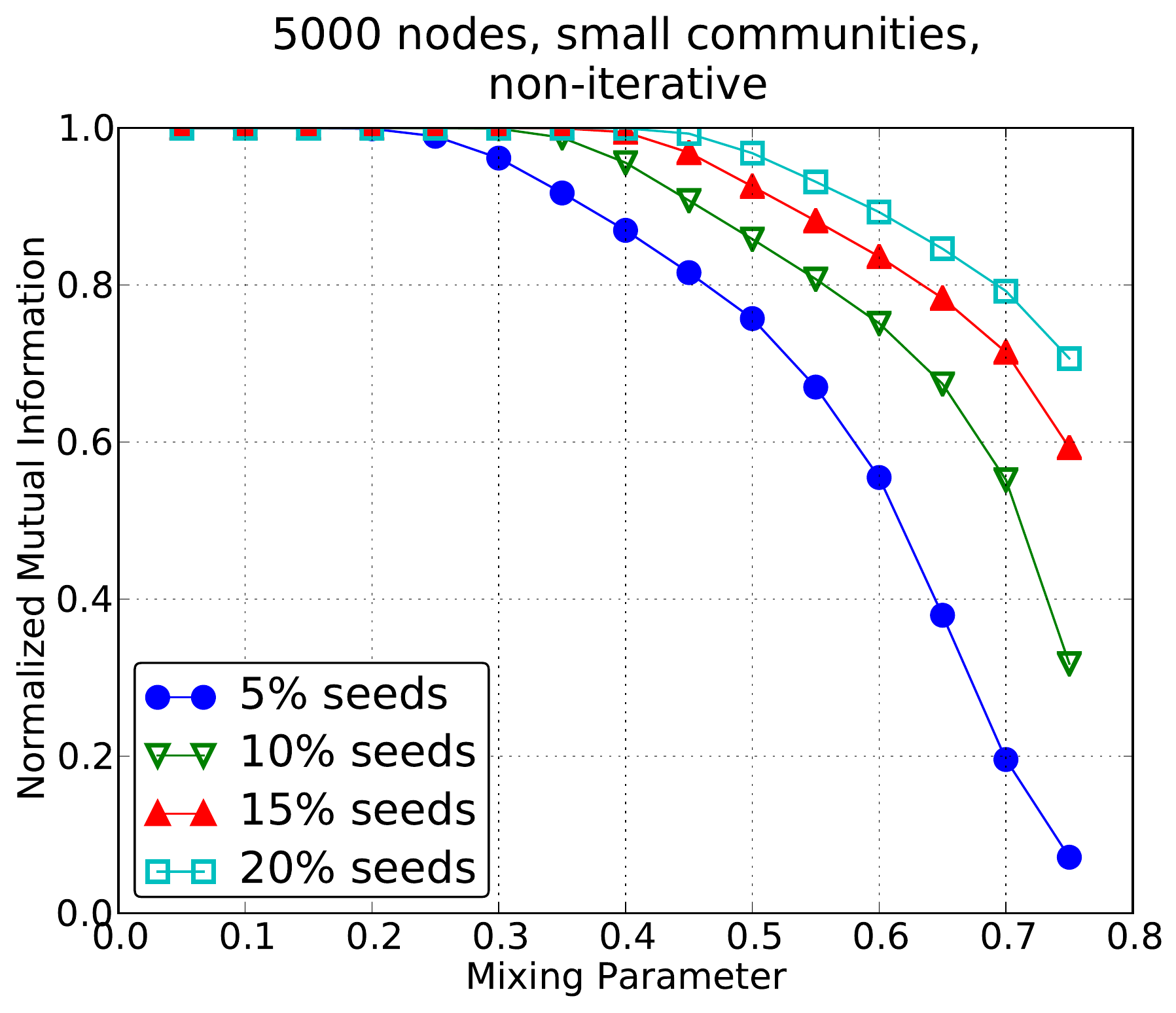}
    \end{subfigure}%
    \begin{subfigure}{0.5\textwidth}
    \centering
    \includegraphics[width=\plotwidth]{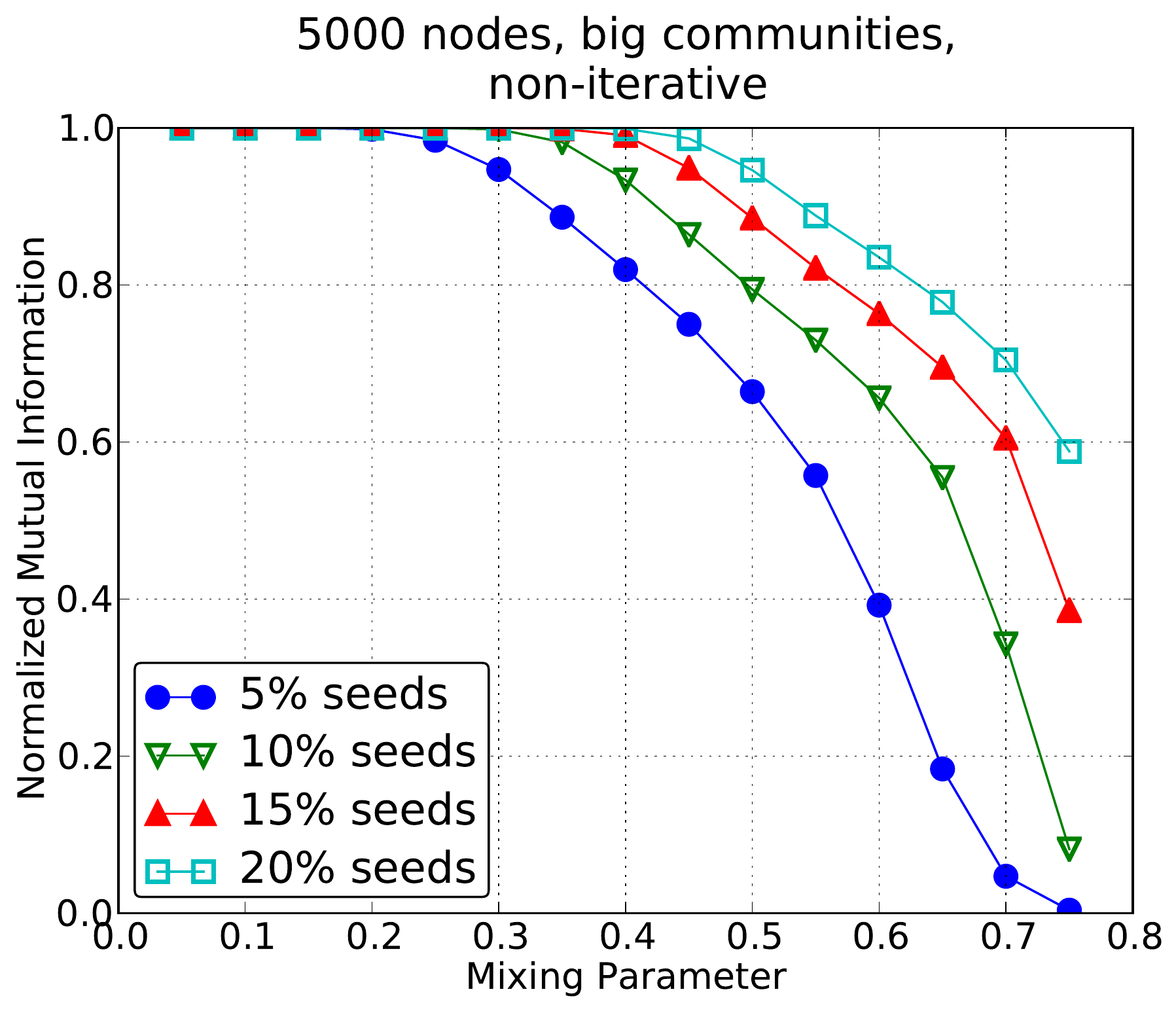}
    \end{subfigure}
    \caption{Non-iterative method for non-overlapping communities.}\label{fig:no_iter_no_overlap}
%
    \centering
    \begin{subfigure}{0.5\textwidth}
    \centering
    \includegraphics[width=\plotwidth]{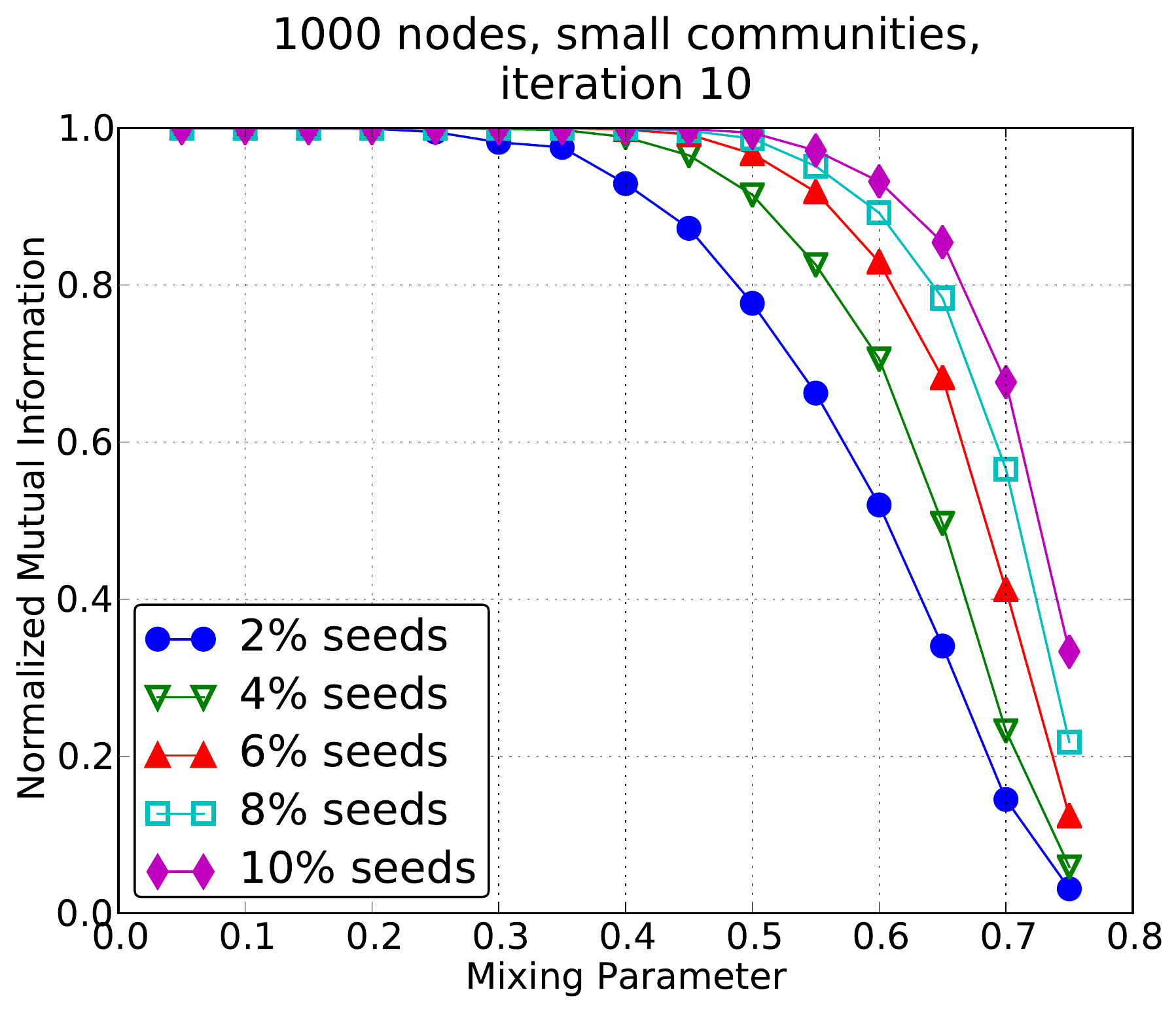}
    \end{subfigure}%
    \begin{subfigure}{0.5\textwidth}
    \centering
    \includegraphics[width=\plotwidth]{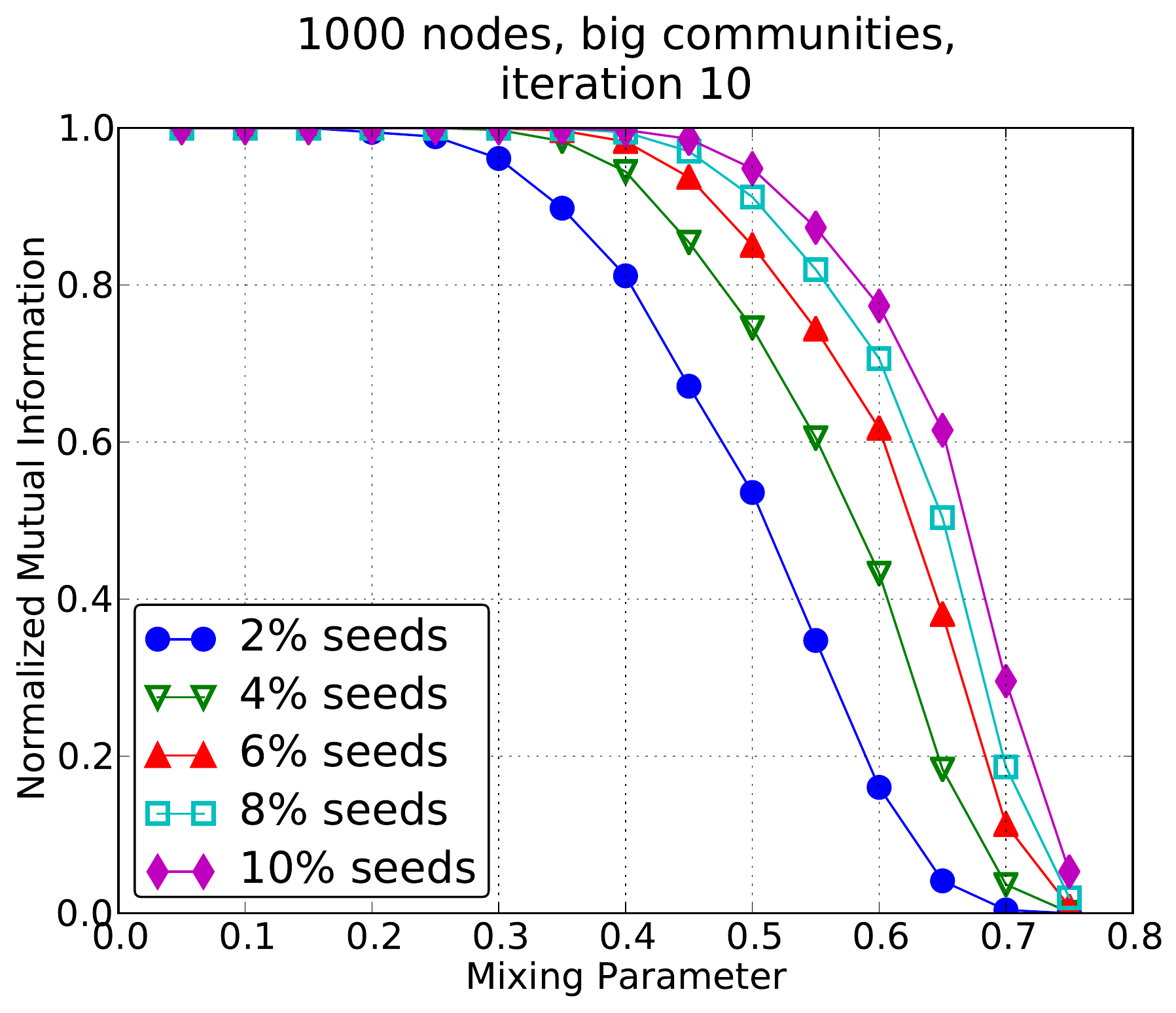}
    \end{subfigure}
    \begin{subfigure}{0.5\textwidth}
    \centering
    \includegraphics[width=\plotwidth]{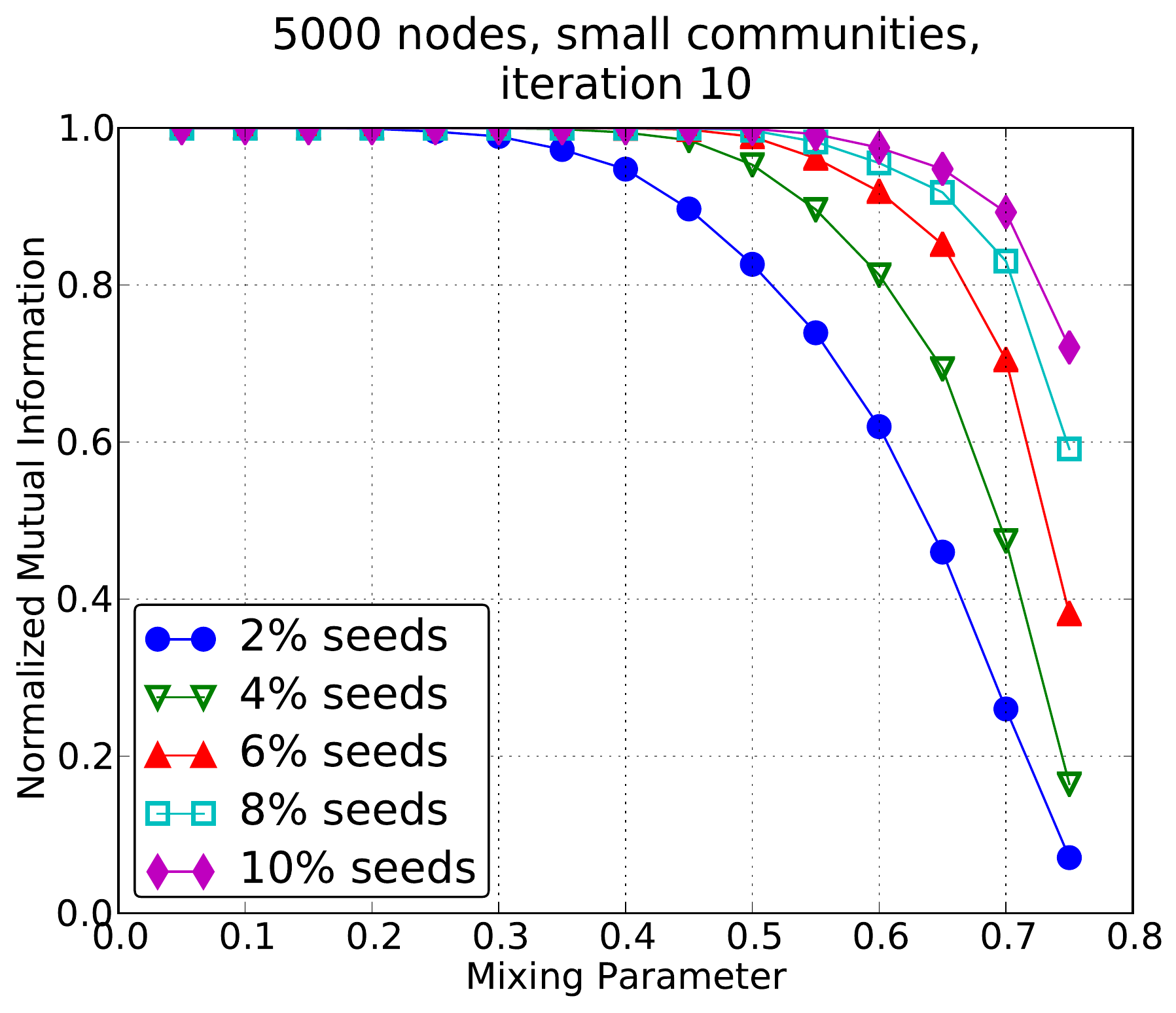}
    \end{subfigure}%
    \begin{subfigure}{0.5\textwidth}
    \centering
    \includegraphics[width=\plotwidth]{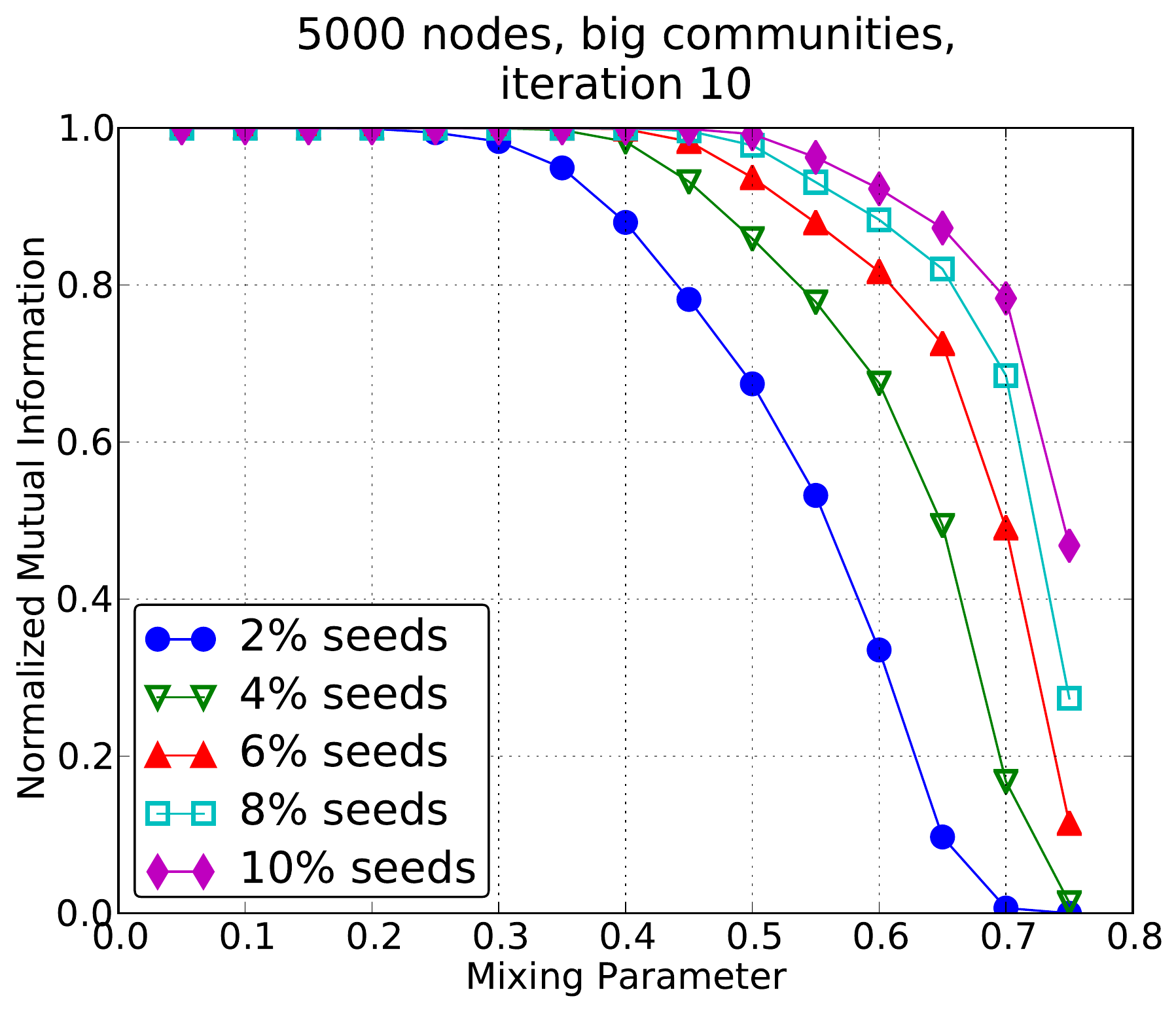}
    \end{subfigure}
    \caption{Iterative method for non-overlapping communities.}\label{fig:iter_no_overlap}
\end{figure}
\begin{figure}[h!]
    \centering
    \begin{subfigure}{0.5\textwidth}
    \centering
    \includegraphics[width=\plotwidth]{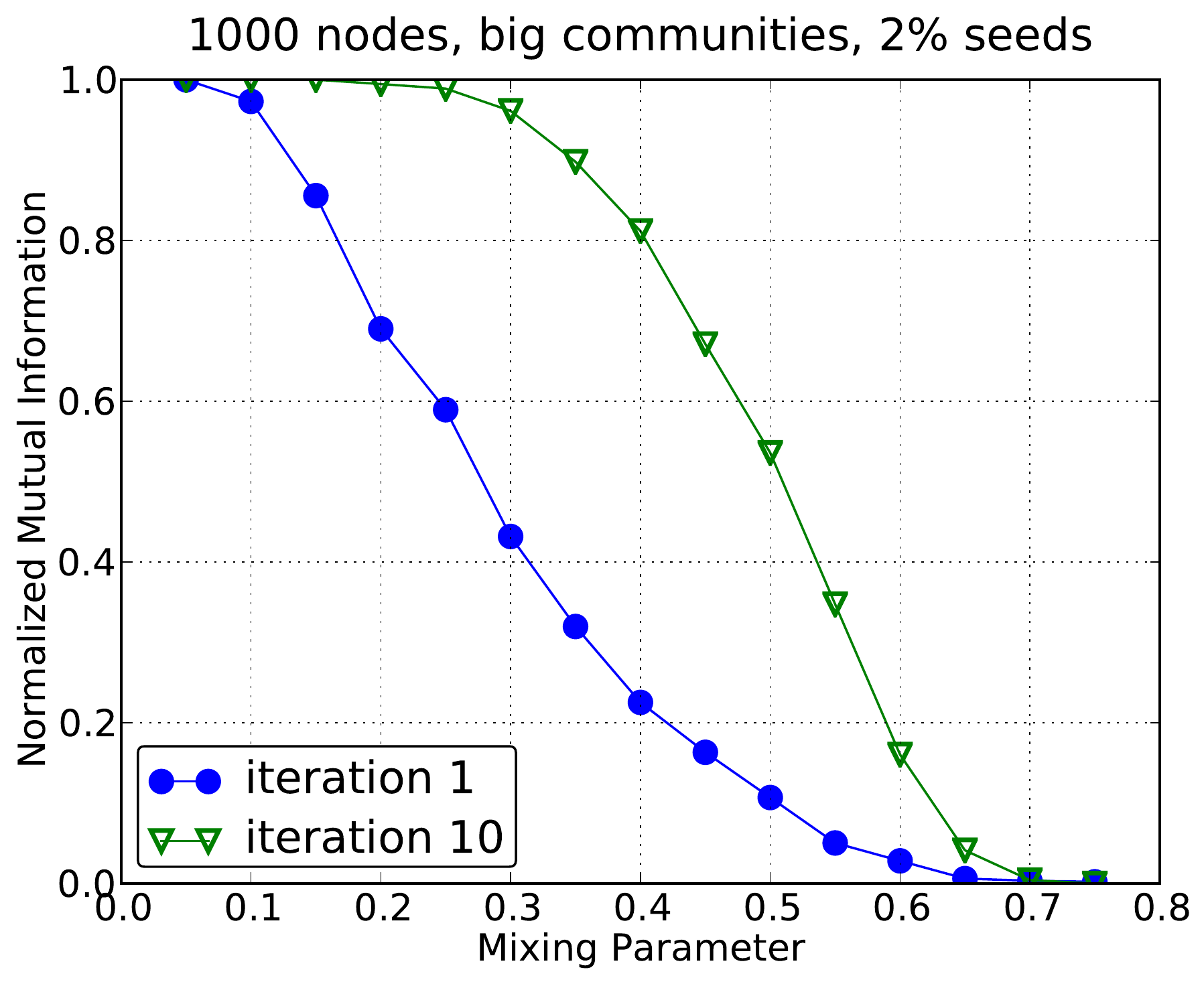}
    \end{subfigure}%
    \begin{subfigure}{0.5\textwidth}
    \centering
    \includegraphics[width=\plotwidth]{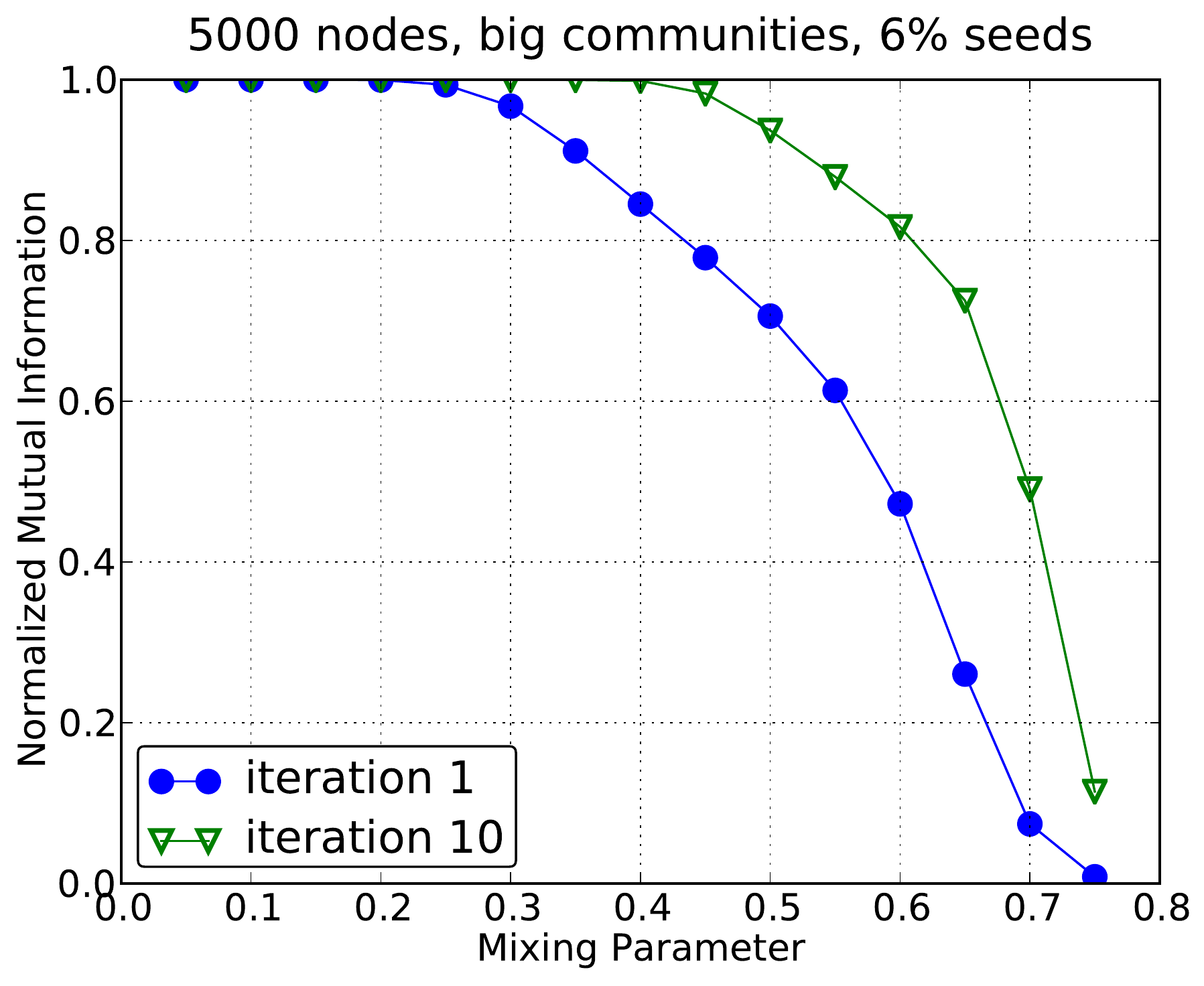}
    \end{subfigure}
    \caption{Comparison between the iterative and non-iterative method for 
		non-overlapping communities.}\label{fig:compare_iter_no_overlap}
%
    \centering
    \begin{subfigure}{0.35\textwidth}
    \centering
    \includegraphics[width=\otherplotswidth]{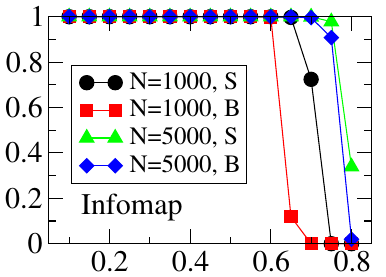}
    \end{subfigure}%
    \begin{subfigure}{0.35\textwidth}
    \centering
    \includegraphics[width=\otherplotswidth]{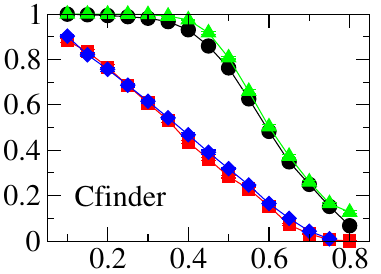}
    \end{subfigure}%
    \begin{subfigure}{0.35\textwidth}
    \centering
    \includegraphics[width=\otherplotswidth]{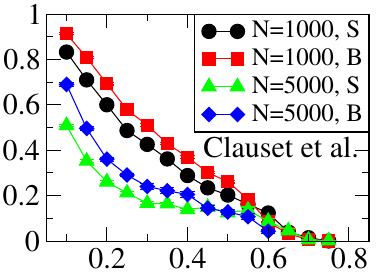}
    \end{subfigure}
    \begin{subfigure}{0.35\textwidth}
    \centering
    \includegraphics[width=\otherplotswidth]{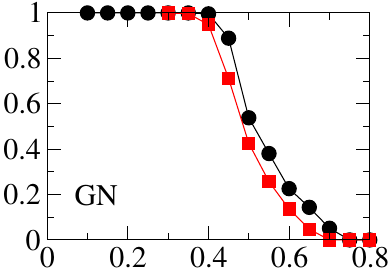}
    \end{subfigure}%
    \begin{subfigure}{0.35\textwidth}
    \centering
    \includegraphics[width=\otherplotswidth]{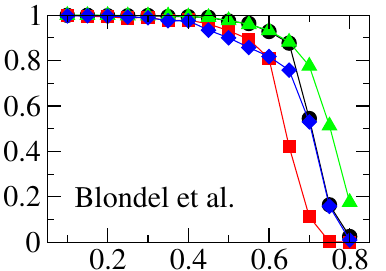}
    \end{subfigure}%
    \begin{subfigure}{0.35\textwidth}
    \centering
    \includegraphics[width=\otherplotswidth]{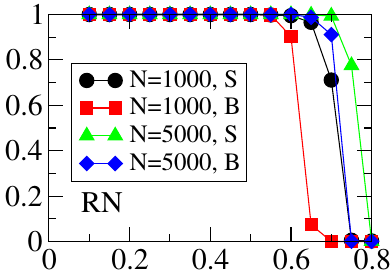}
    \end{subfigure}%
    \caption{
        Plots for Infomap, CFinder, the algorithm of Clauset \etal, Girvan-Newman (GN), Blondel \etal, 
        and the Pott's model approach by Ronhovde and Nussinov (RN) on the LFR benchmark for non-overlapping 
		communities. As usual, the NMI-value ($y$-axis) is plotted against the mixing factor ($x$-axis).
        Tests were performed on graphs with 1000 and 5000 nodes with big (B) and small (S) communities.
        Reproduced from~\cite{LF09}.
    }\label{fig:Infomap_etal}
\end{figure}

\subsection{Overlapping communities}
Figures~\ref{fig:no_iter_overlap_1000N} and~\ref{fig:no_iter_overlap_5000N} 
show our results for the overlapping case. In the study of Lancichinetti and Fortunato~\cite{LF09}, 
only one algorithm (\emph{Cfinder}~\cite{PDFV05}) for overlapping communities was benchmarked 
(see Figure~\ref{fig:CFinder_overlapping}). 
The main difference with the non-overlapping case is that typically our algorithm needs a larger 
seed node percentage per community. This is not surprising since in the overlapping case, we would 
need seed nodes from the various overlaps as well as from the non-overlapping portions of communities 
to make a good-enough calculation of the affinities. 

For graphs of both 1000 and 5000 nodes, our algorithm performs better 
than Cfinder up to an overlapping fraction of $0.4$. We stress that Cfinder 
has an exponential worst-case running time and would be infeasible on larger graphs. 
%
%

Figures~\ref{fig:iter_overlap_1000N} and~\ref{fig:iter_overlap_5000N} show the 
plots for the iterative method (with 10 iterations). A comparison of the 
non-iterative and iterative method is shown in Figure~\ref{fig:compare_iter_overlap}. 
Iteration yields an improvement in performance, as measured by the NMI, but it is 
not as dramatic as in the non-overlapping case with the NMI increase being at most 
$10\%$ at best. The percentage of seed nodes per community required in the 
iterative approach with a mixing factor of $0.3$ is around 8$\%$.

\begin{figure}[h!]
    \centering
    \begin{subfigure}{0.5\textwidth}
    \centering
    \includegraphics[width=\plotwidth]{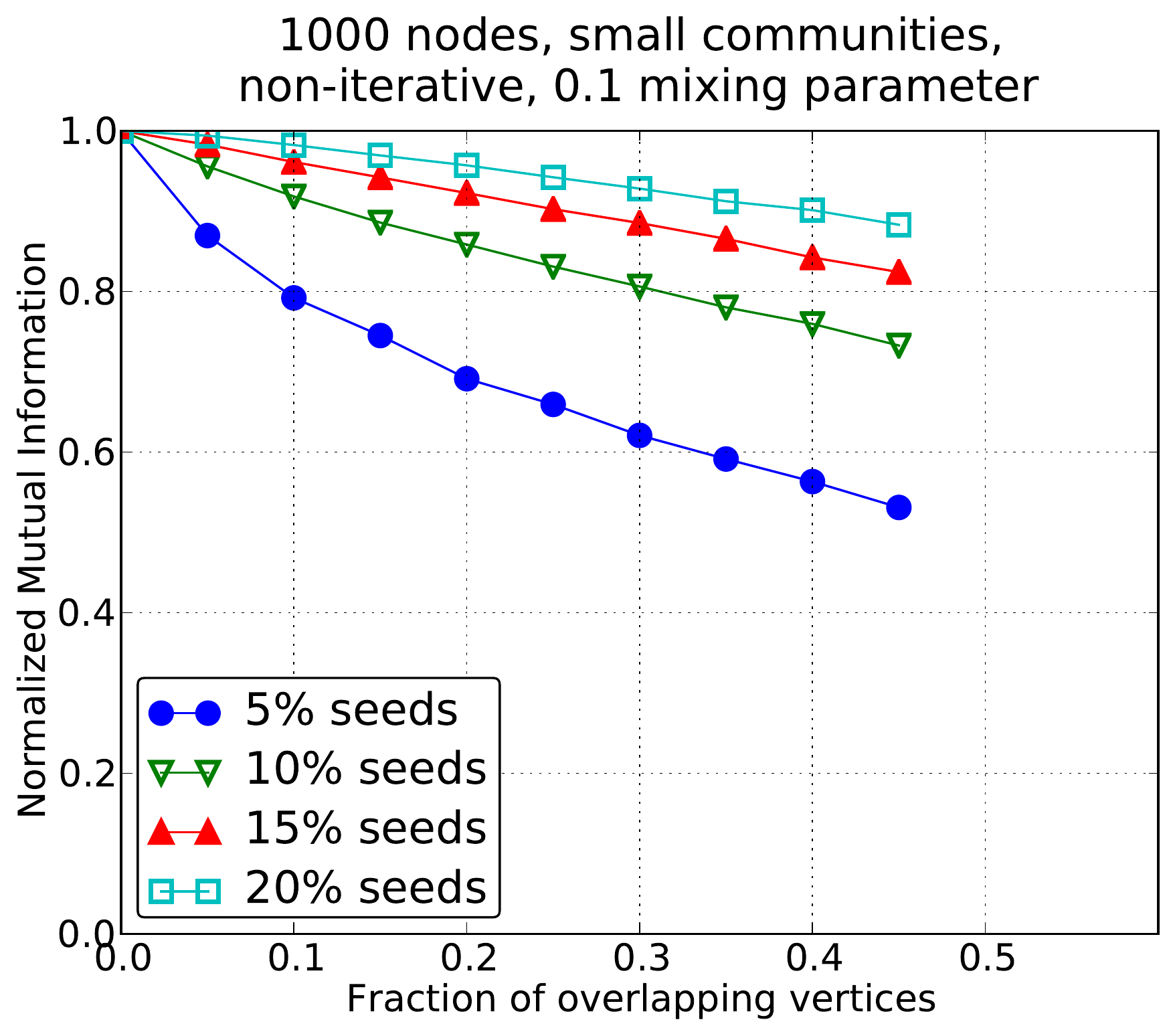}
    \end{subfigure}%
    \begin{subfigure}{0.5\textwidth}
    \centering
    \includegraphics[width=\plotwidth]{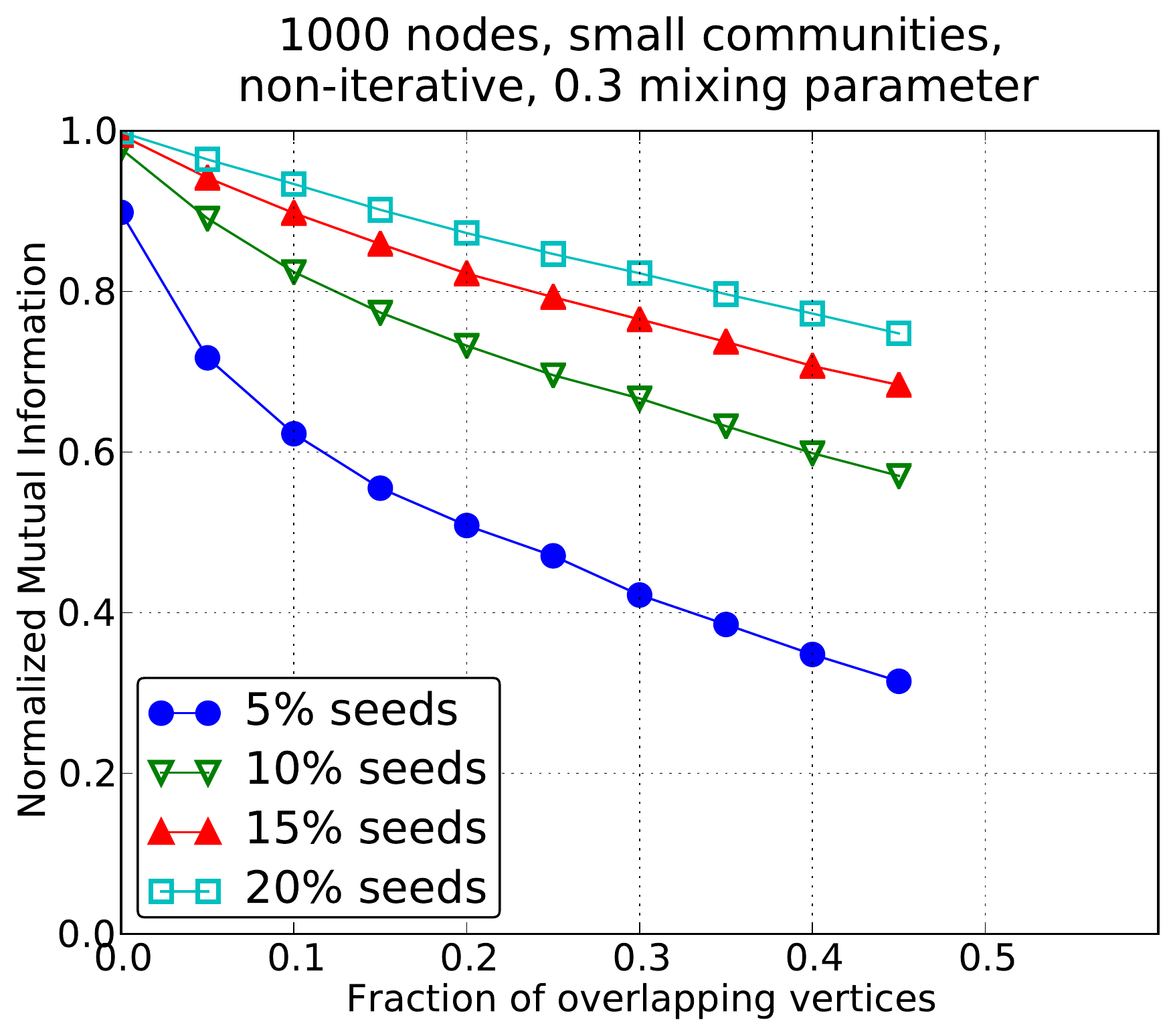}
    \end{subfigure}
    \begin{subfigure}{0.5\textwidth}
    \centering
    \includegraphics[width=\plotwidth]{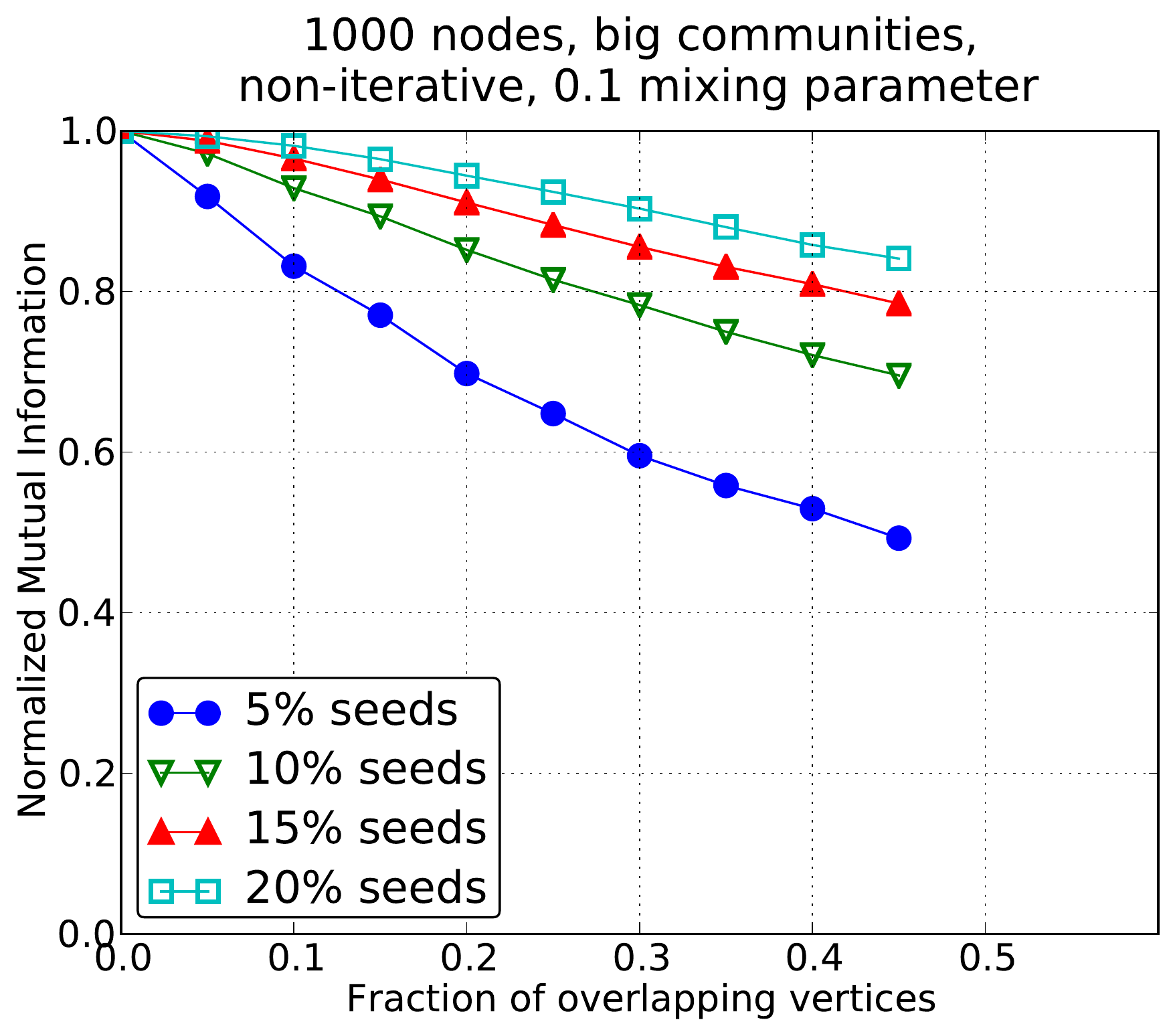}
    \end{subfigure}%
    \begin{subfigure}{0.5\textwidth}
    \centering
    \includegraphics[width=\plotwidth]{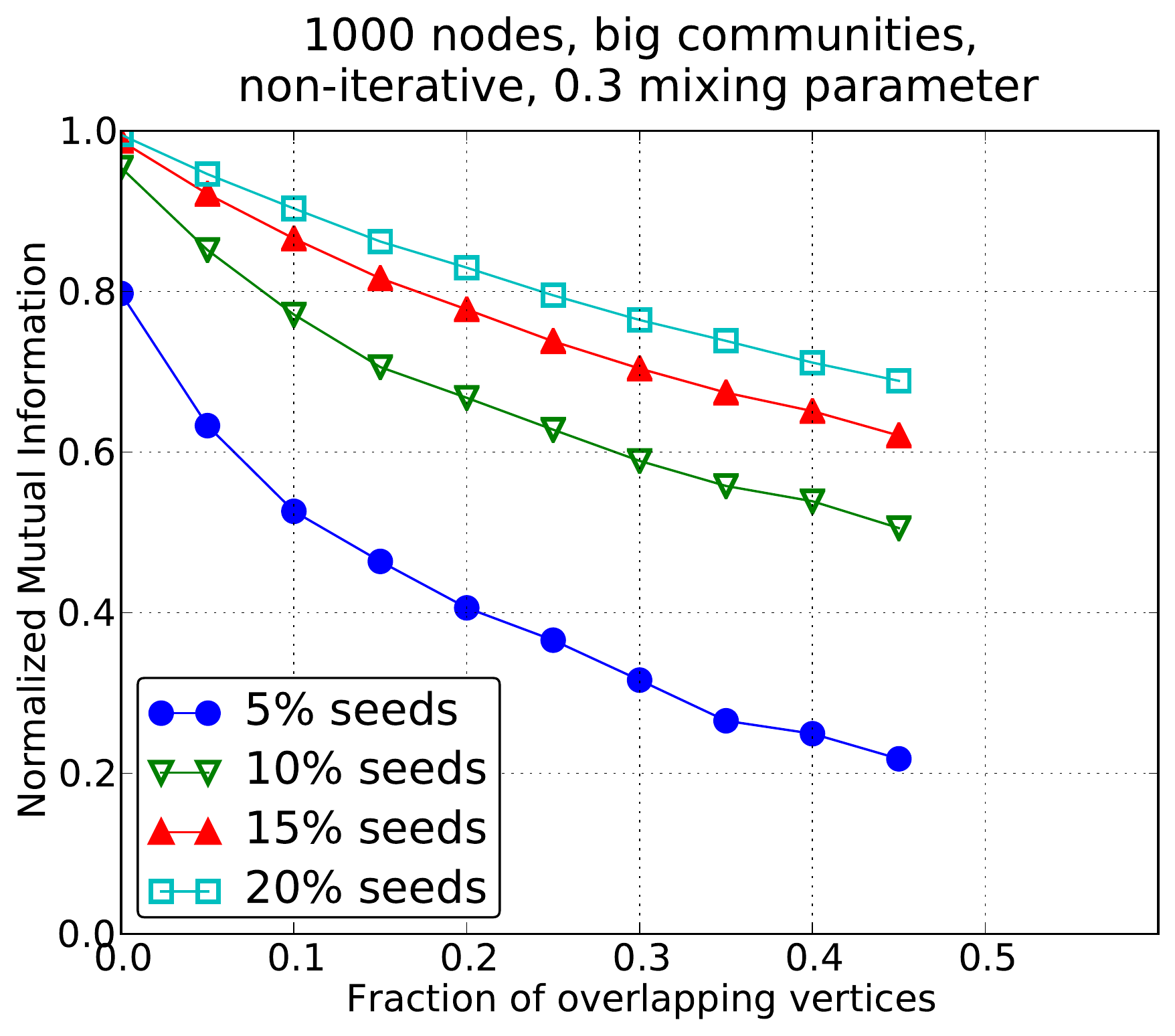}
    \end{subfigure}
    \caption{Non-iterative method for overlapping communities on 1000 nodes.}\label{fig:no_iter_overlap_1000N}
%
    \centering
    \begin{subfigure}{0.5\textwidth}
    \centering
    \includegraphics[width=\plotwidth]{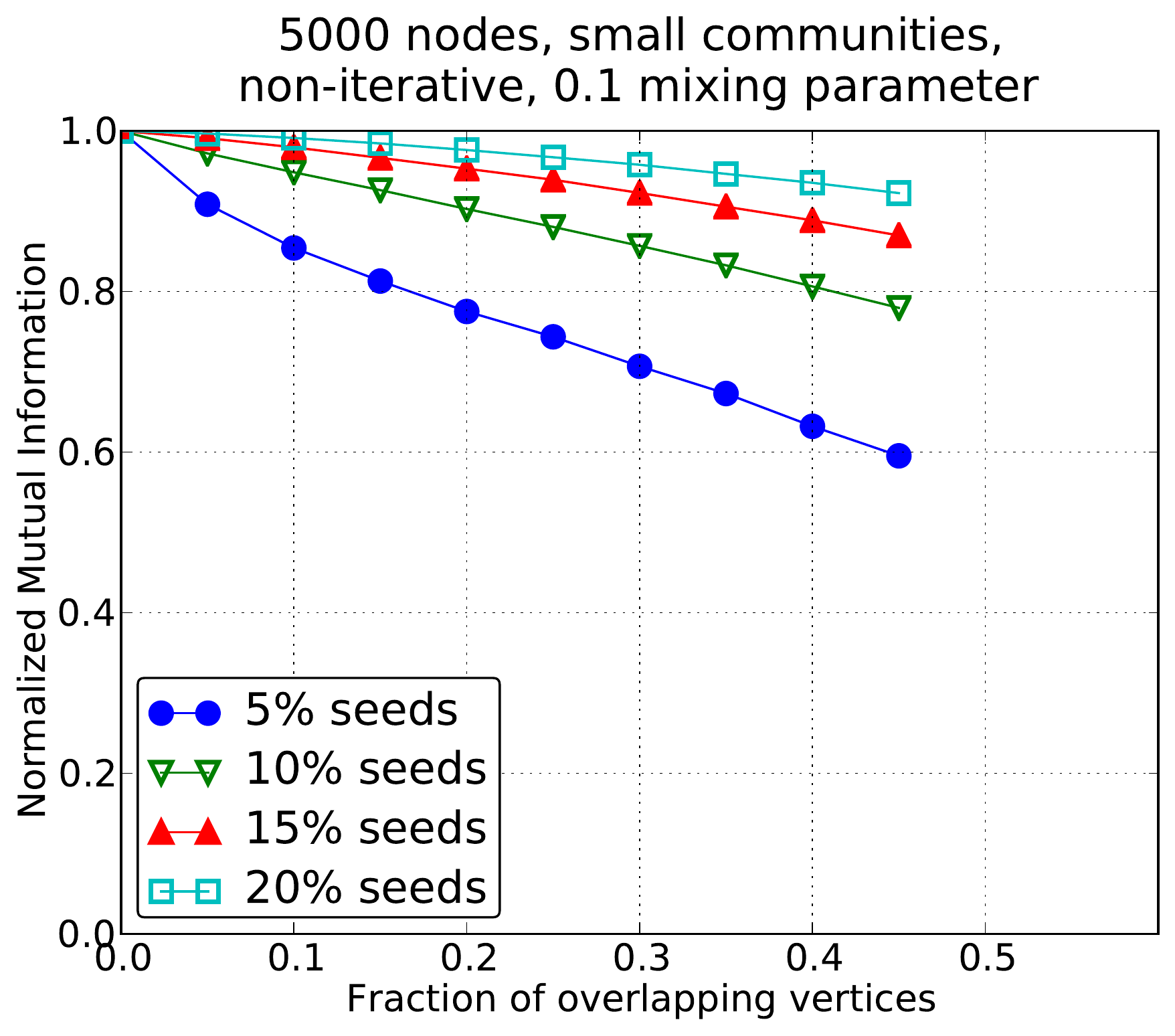}
    \end{subfigure}%
    \begin{subfigure}{0.5\textwidth}
    \centering
    \includegraphics[width=\plotwidth]{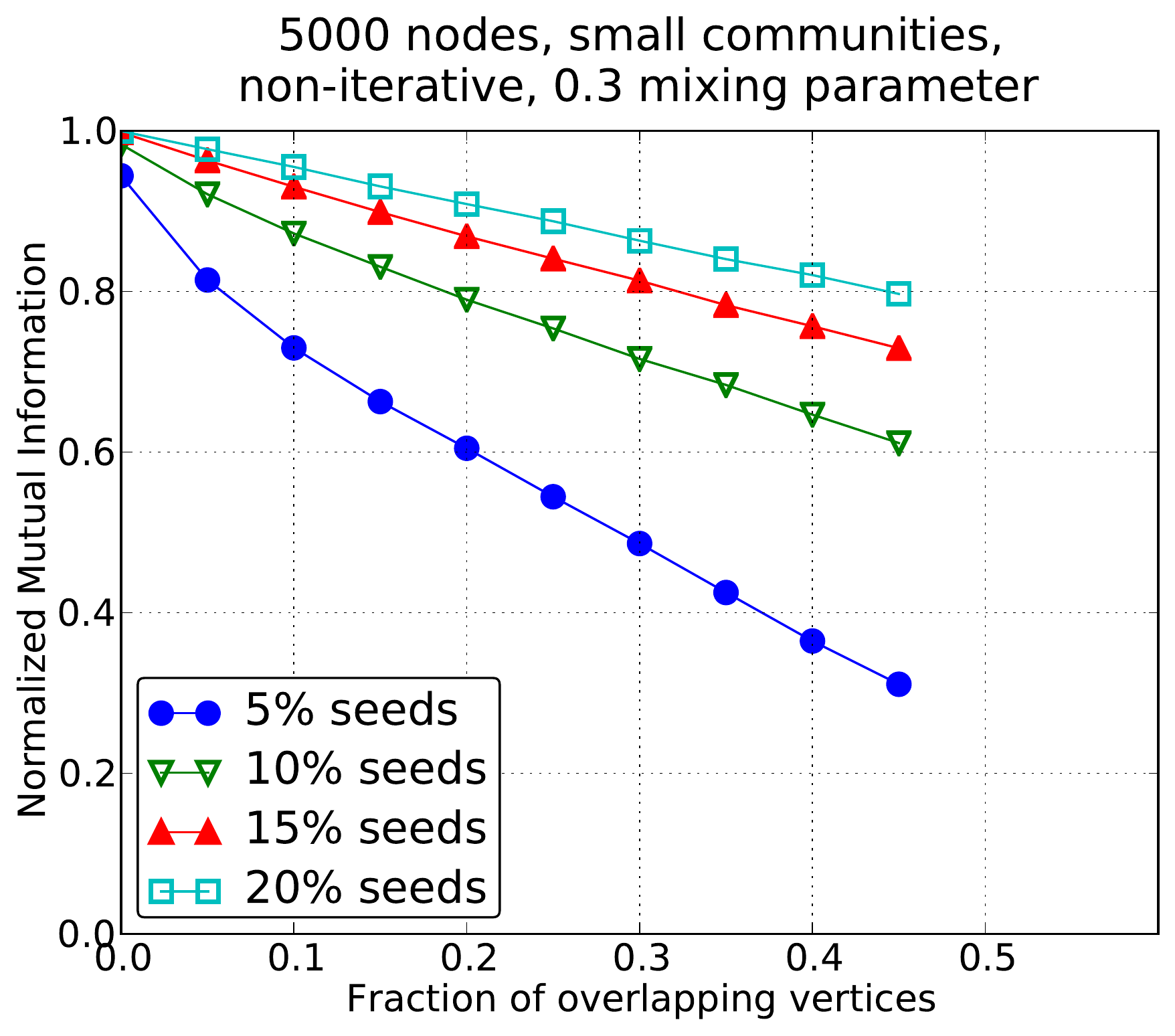}
    \end{subfigure}
    \begin{subfigure}{0.5\textwidth}
    \centering
    \includegraphics[width=\plotwidth]{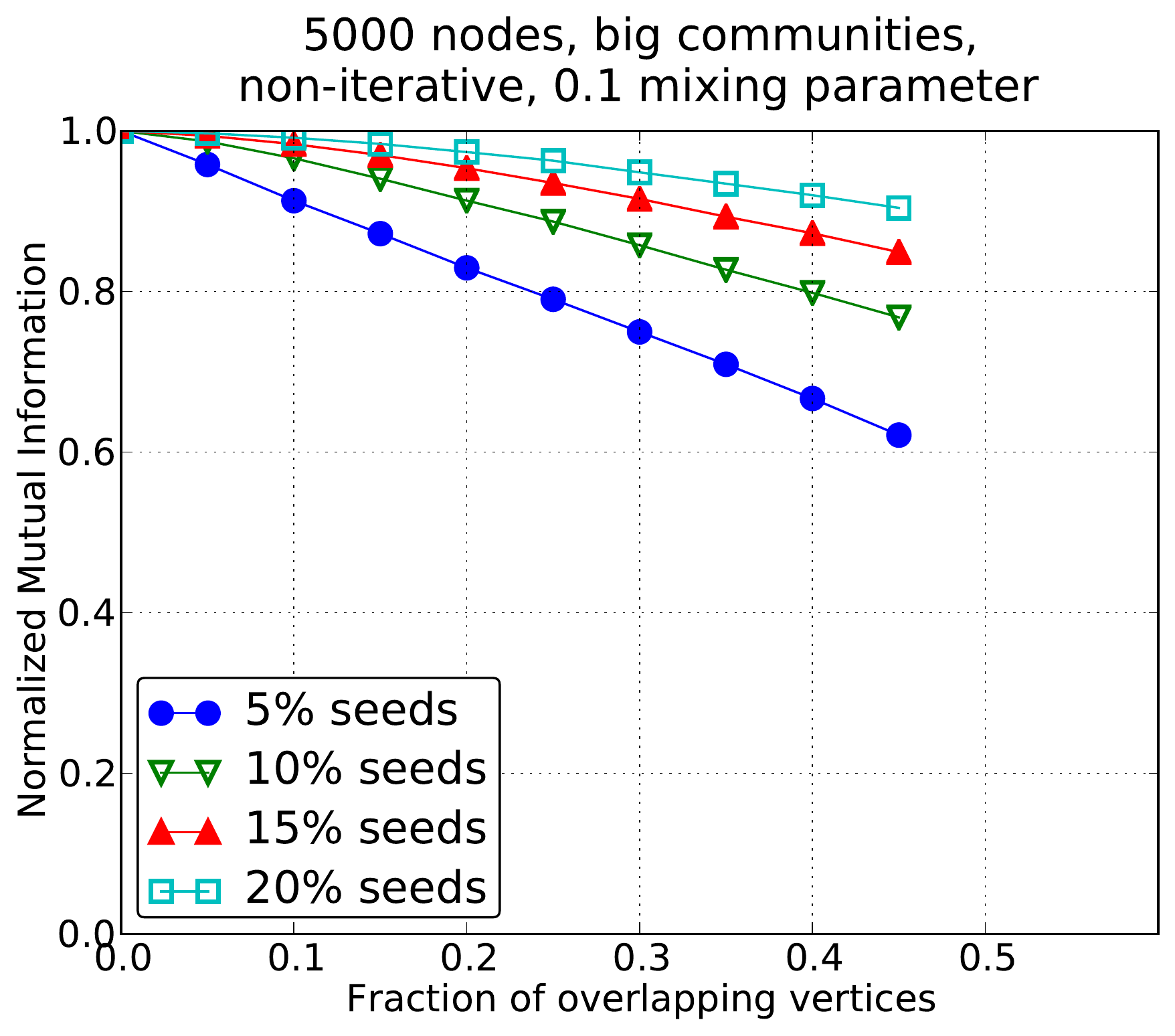}
    \end{subfigure}%
    \begin{subfigure}{0.5\textwidth}
    \centering
    \includegraphics[width=\plotwidth]{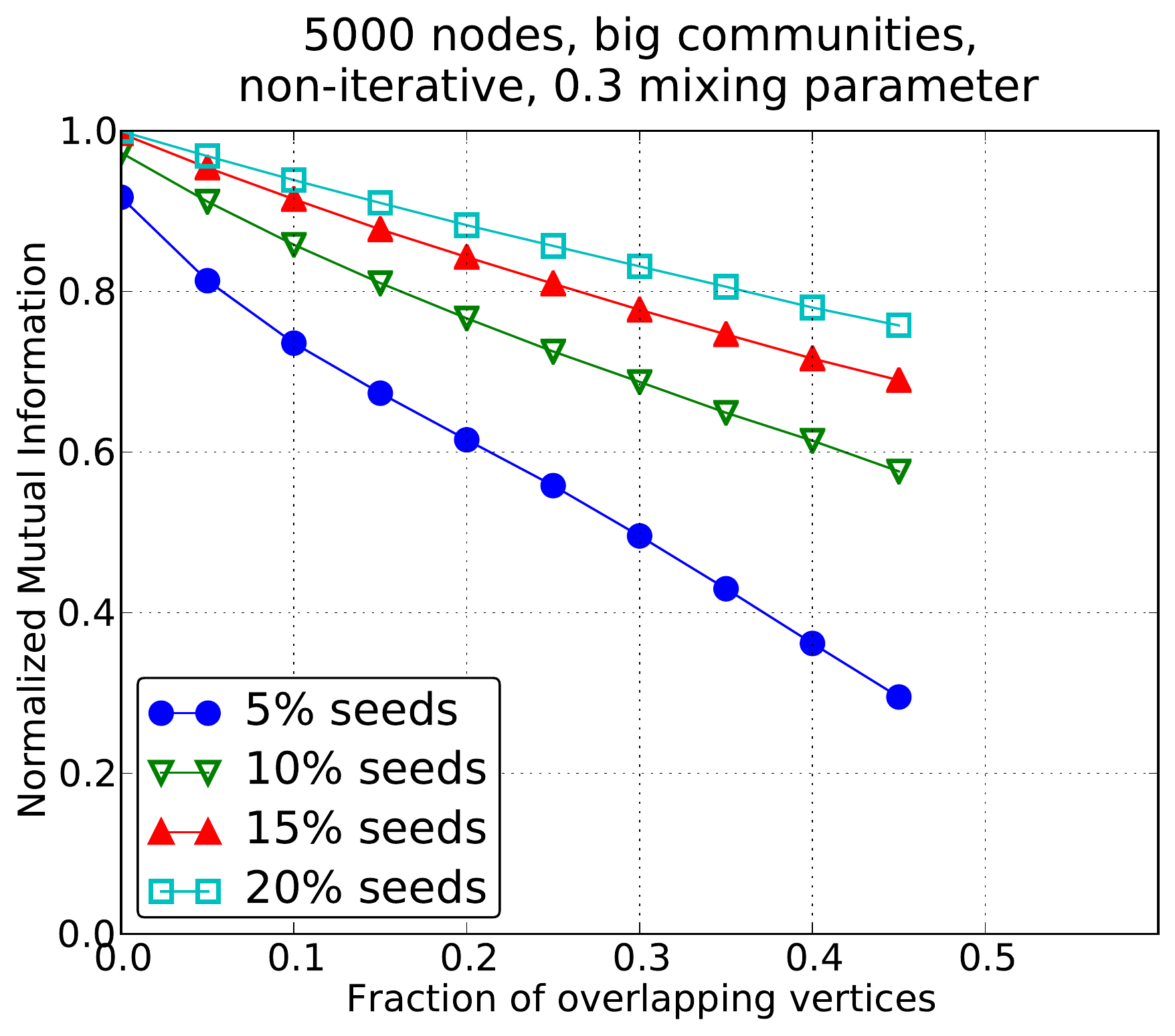}
    \end{subfigure}
    \caption{Non-iterative method for overlapping communities on 5000 nodes.}\label{fig:no_iter_overlap_5000N}
\end{figure}
\begin{figure}[h!]
    \centering
    \begin{subfigure}{0.5\textwidth}
    \centering
    \includegraphics[width=\plotwidth]{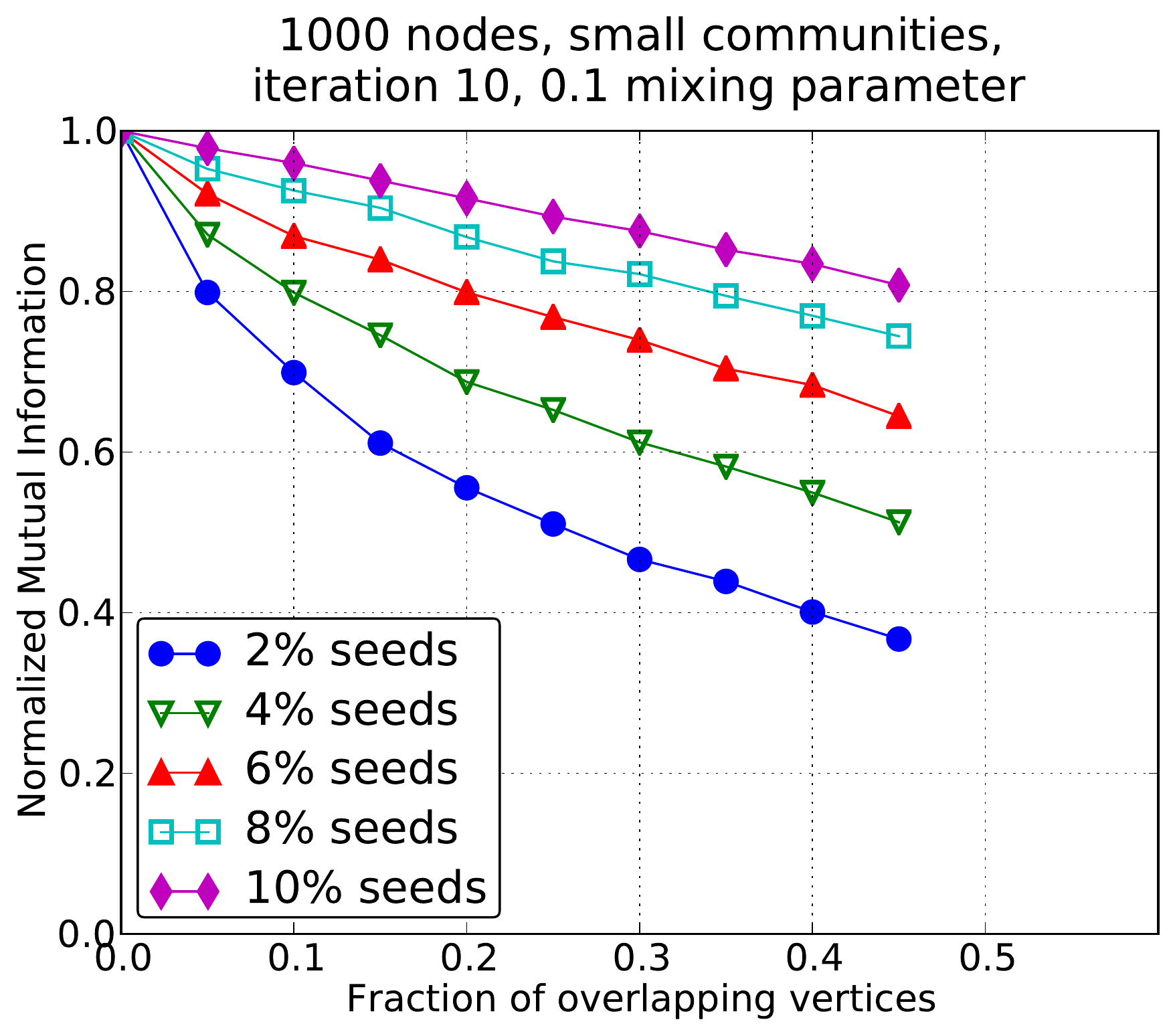}
    \end{subfigure}%
    \begin{subfigure}{0.5\textwidth}
    \centering
    \includegraphics[width=\plotwidth]{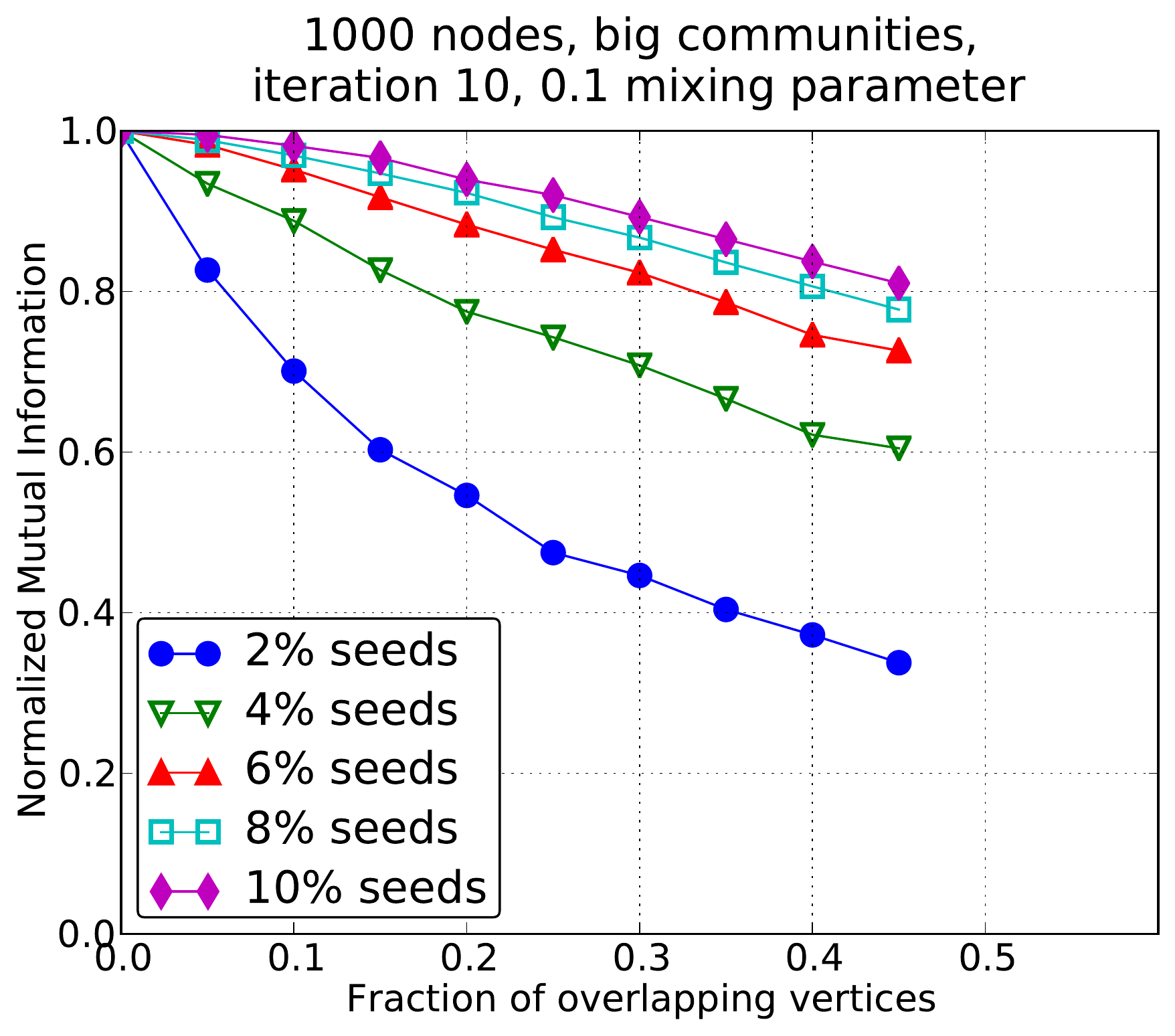}
    \end{subfigure}
    \begin{subfigure}{0.5\textwidth}
    \centering
    \includegraphics[width=\plotwidth]{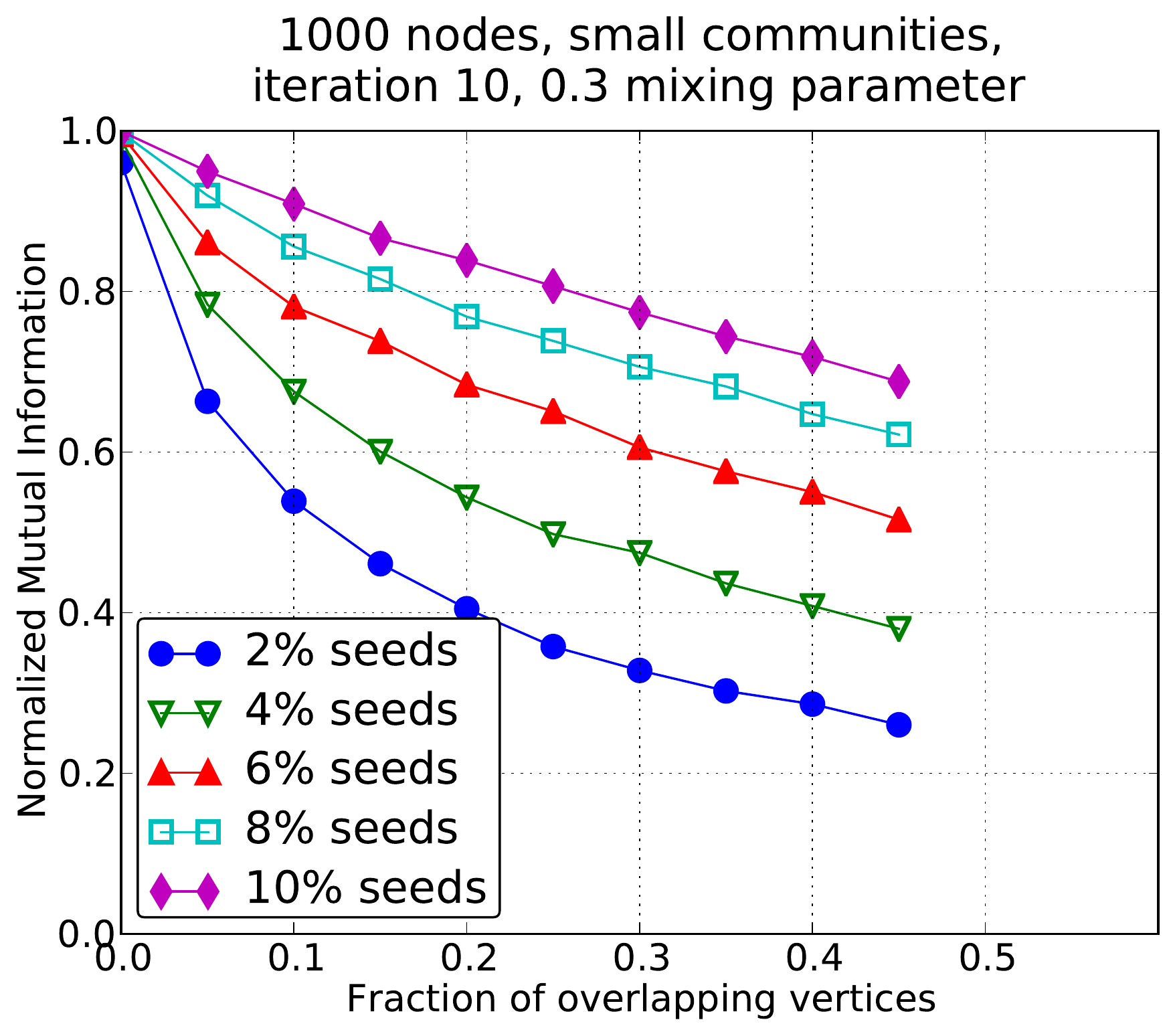}
    \end{subfigure}%
    \begin{subfigure}{0.5\textwidth}
    \centering
    \includegraphics[width=\plotwidth]{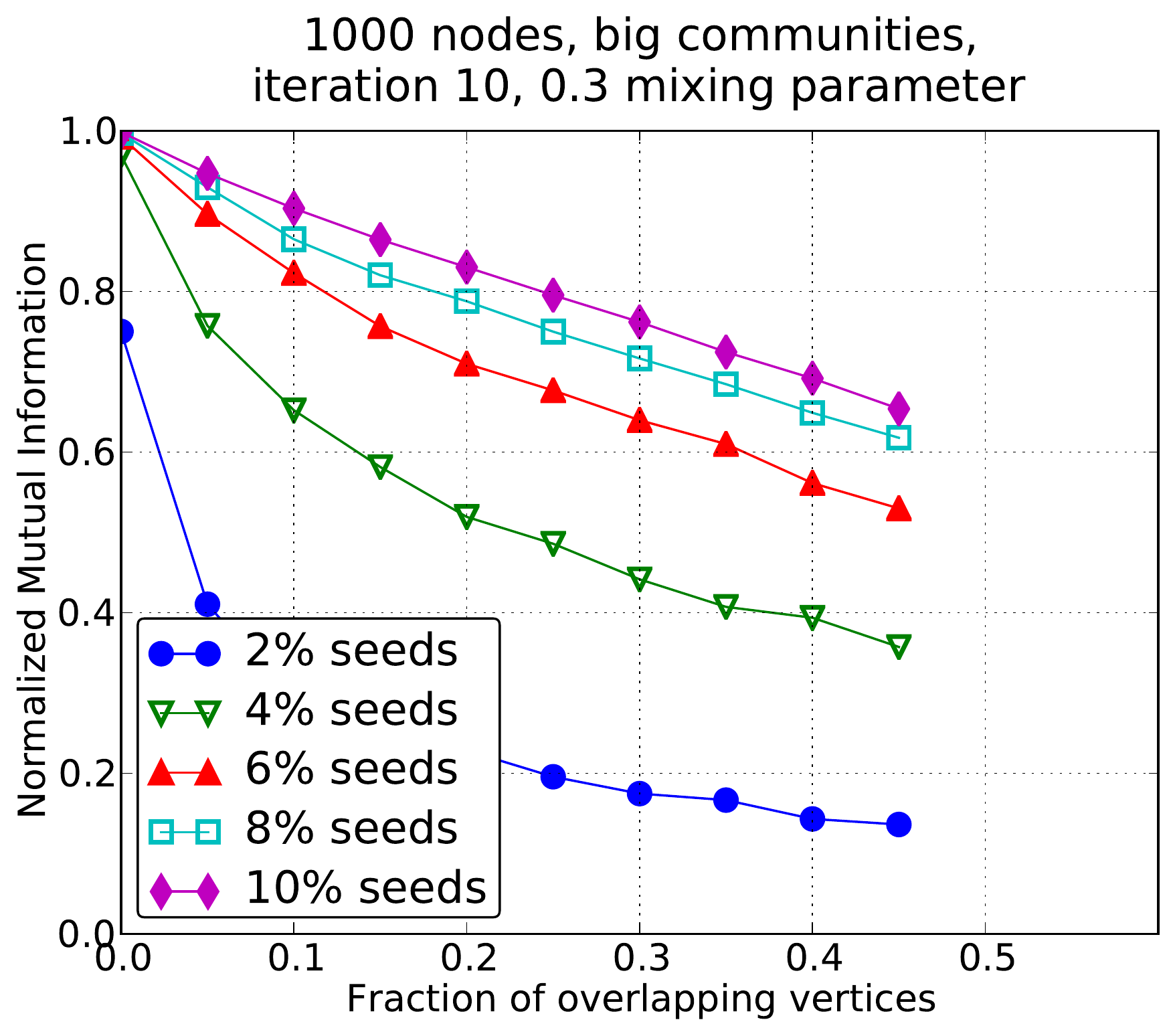}
    \end{subfigure}
    \caption{Iterative method for overlapping communities on 1000 nodes.}\label{fig:iter_overlap_1000N}
%
    \centering
    \begin{subfigure}{0.5\textwidth}
    \centering
    \includegraphics[width=\plotwidth]{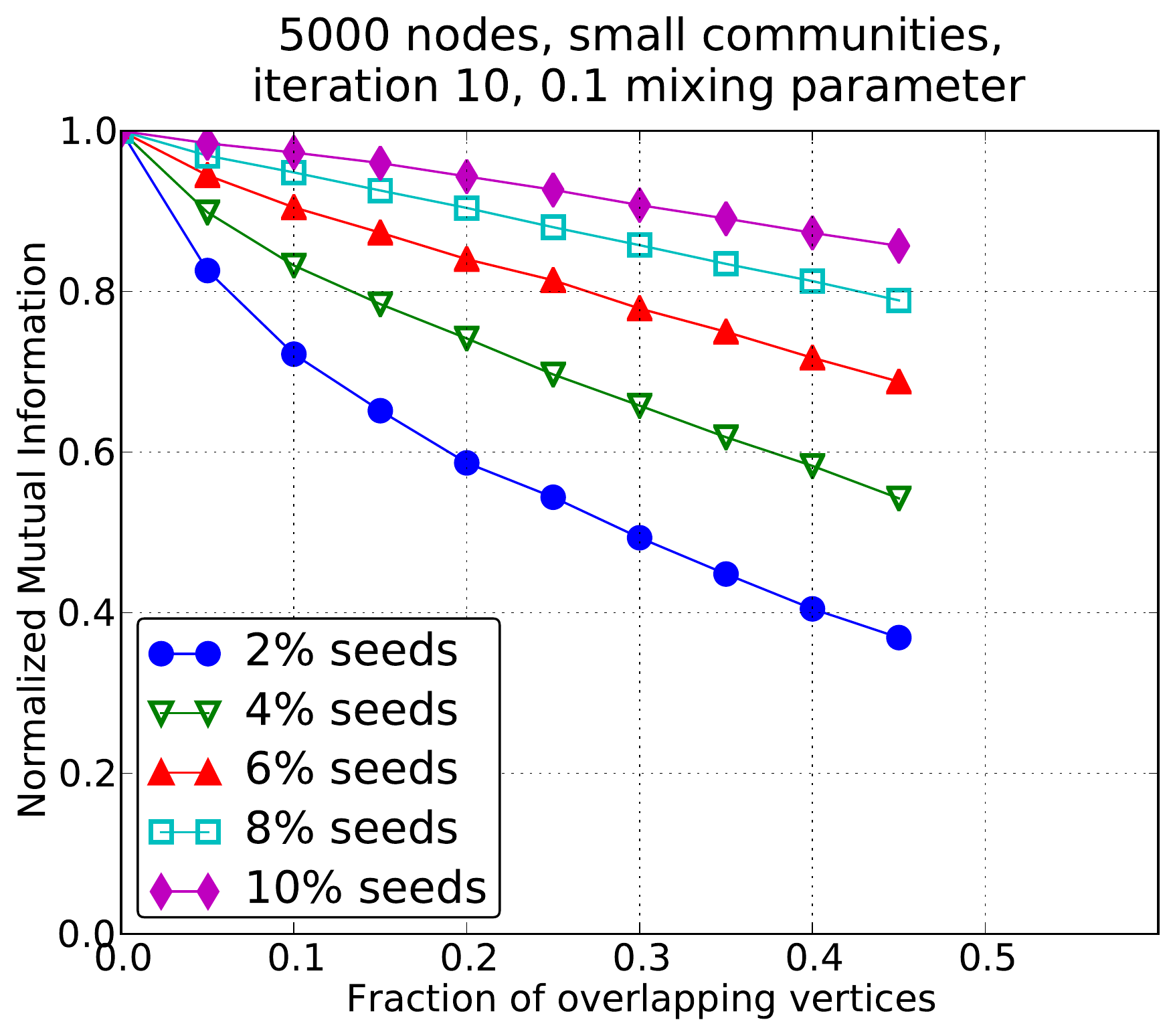}
    \end{subfigure}%
    \begin{subfigure}{0.5\textwidth}
    \centering
    \includegraphics[width=\plotwidth]{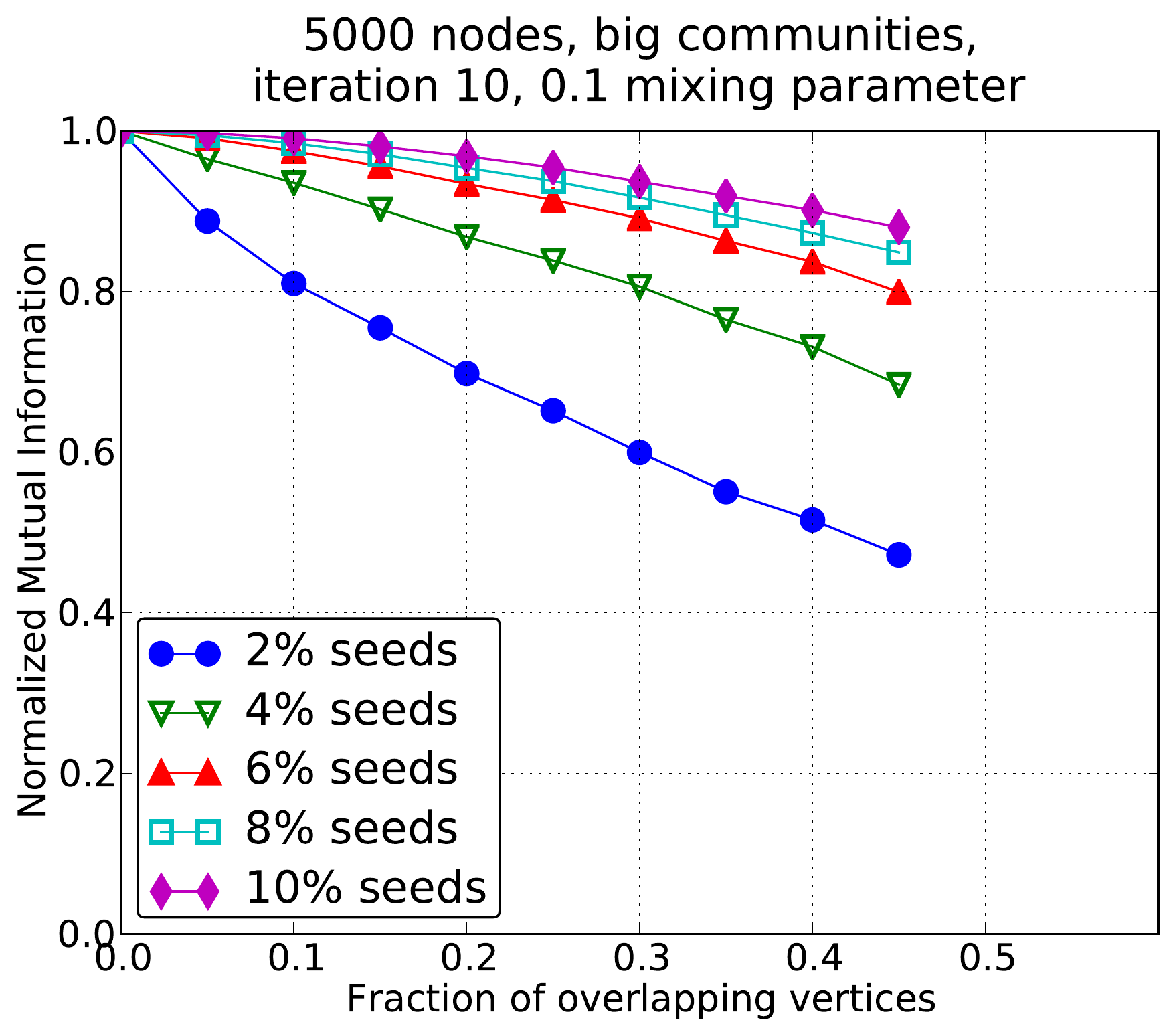}
    \end{subfigure}
    \begin{subfigure}{0.5\textwidth}
    \centering
    \includegraphics[width=\plotwidth]{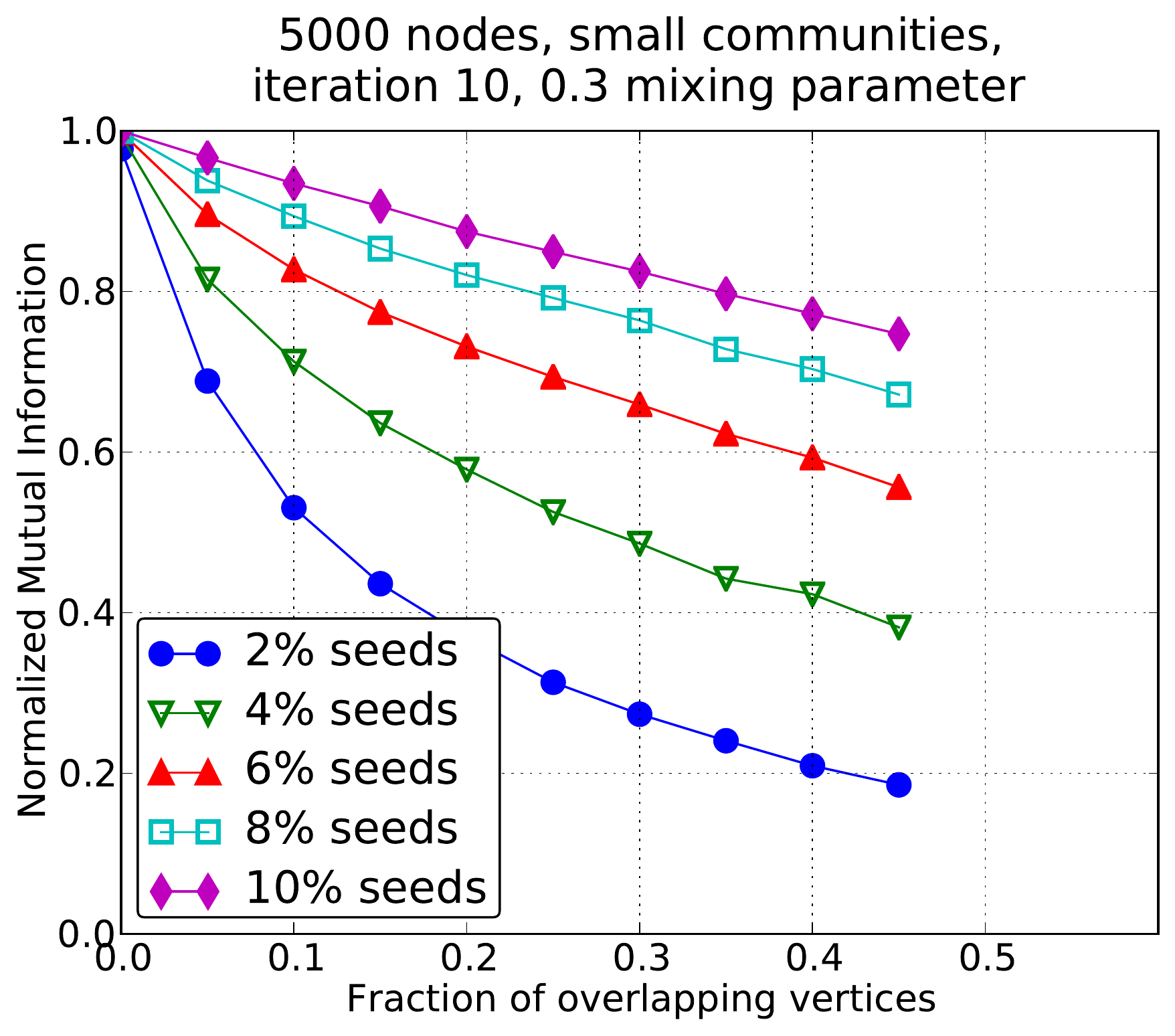}
    \end{subfigure}%
    \begin{subfigure}{0.5\textwidth}
    \centering
    \includegraphics[width=\plotwidth]{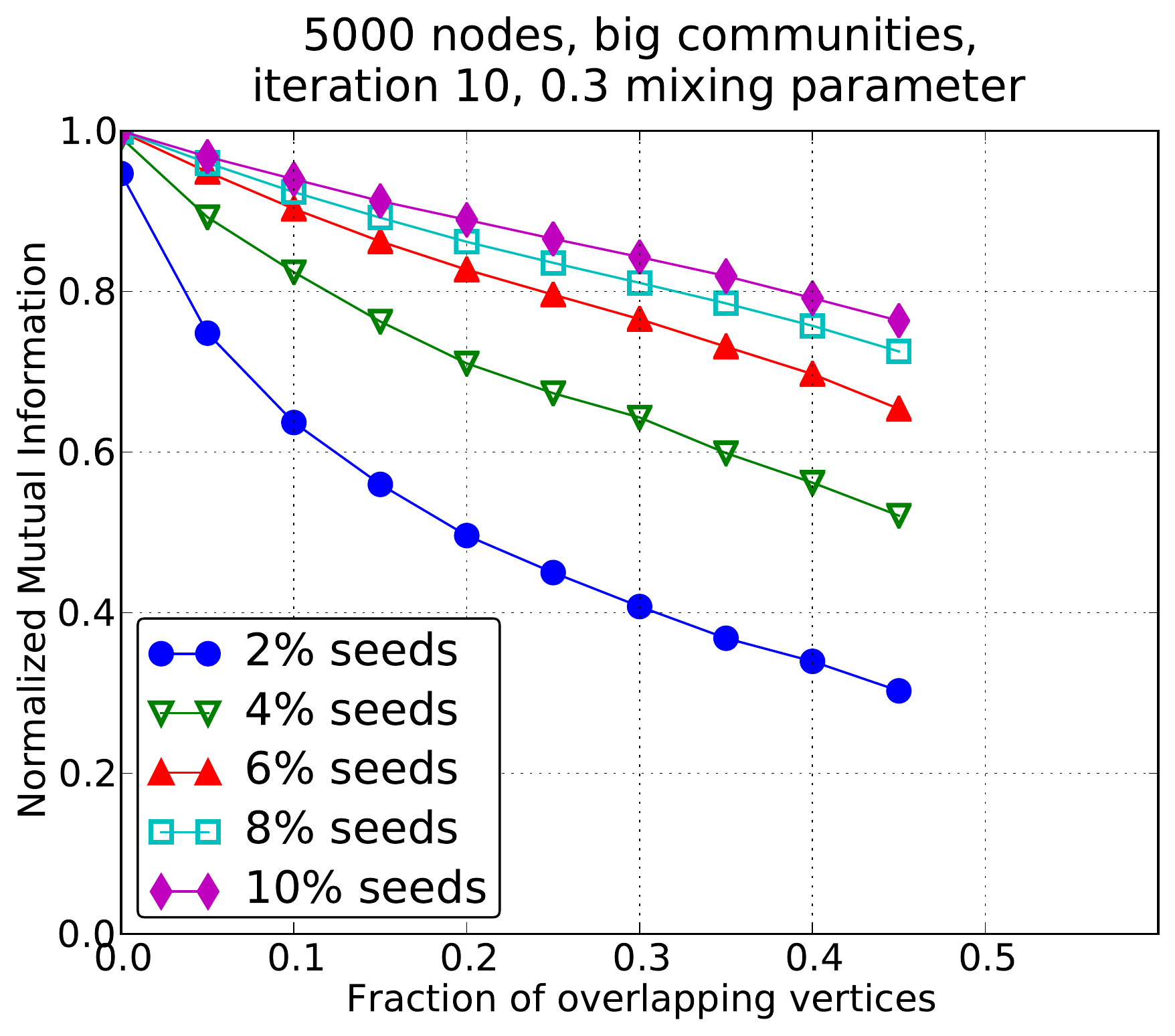}
    \end{subfigure}
    \caption{Iterative method for overlapping communities on 5000 nodes.}\label{fig:iter_overlap_5000N}
\end{figure}
\begin{figure}[h!]
    \centering
    \begin{subfigure}{0.5\textwidth}
    \centering
    \includegraphics[width=\plotwidth]{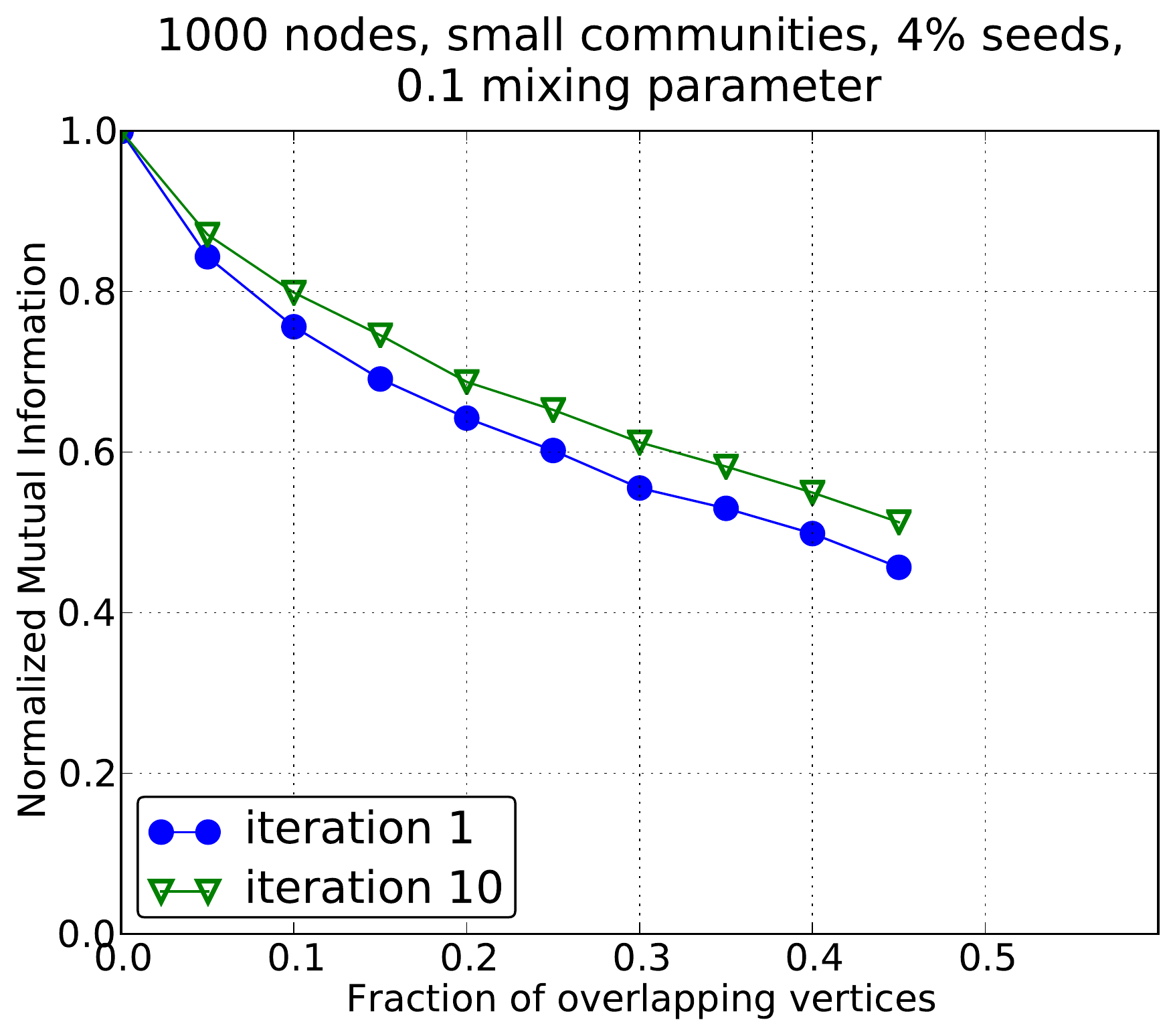}
    \end{subfigure}%
    \begin{subfigure}{0.5\textwidth}
    \centering
    \includegraphics[width=\plotwidth]{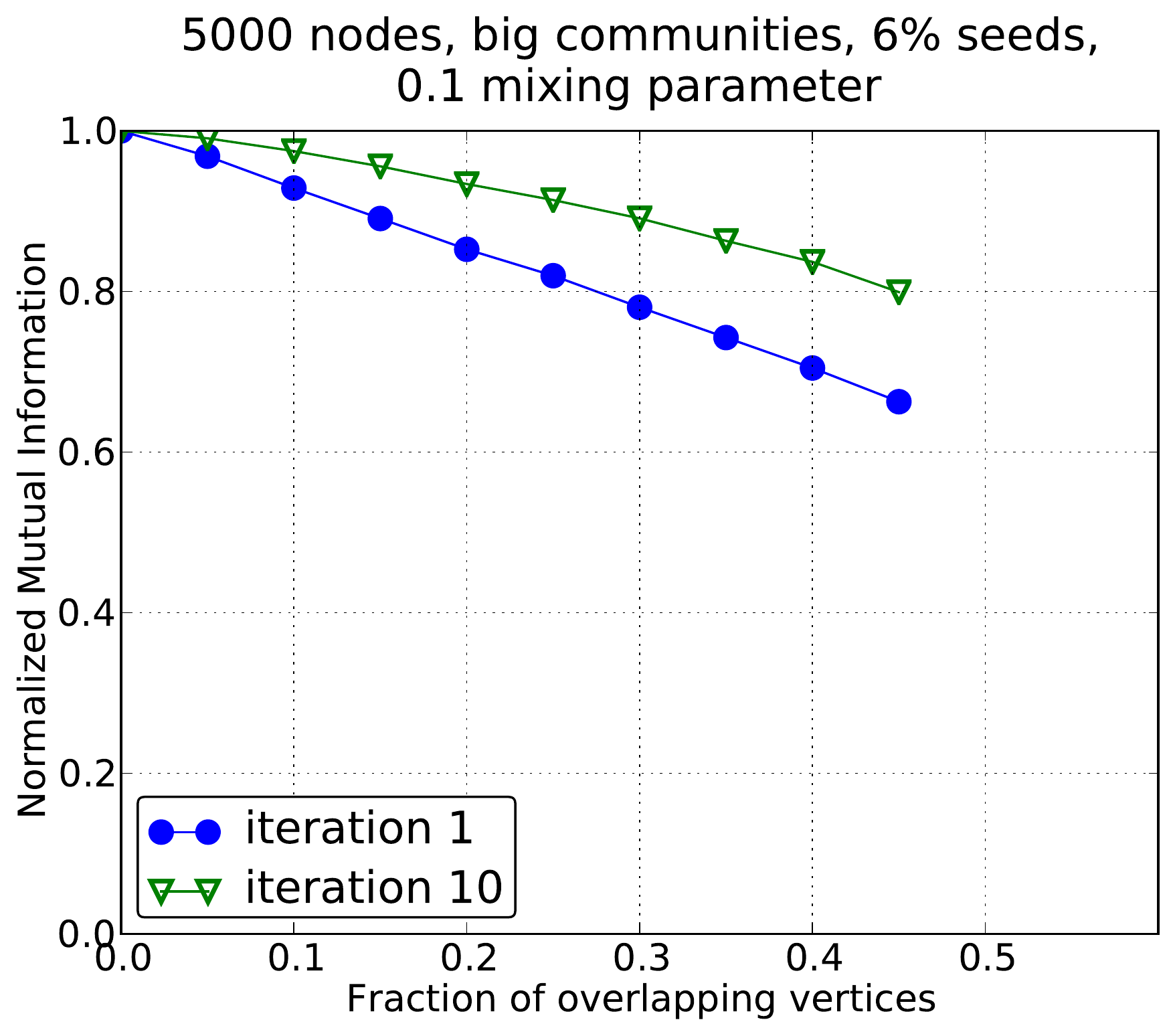}
    \end{subfigure}
    \caption{Comparison between the iterative and non-iterative method for overlapping communities.}\label{fig:compare_iter_overlap}
\end{figure}
\begin{figure}[h!]
    \centering
    \begin{subfigure}{0.5\textwidth}
    \centering
    \includegraphics[width=\cfinderwidth]{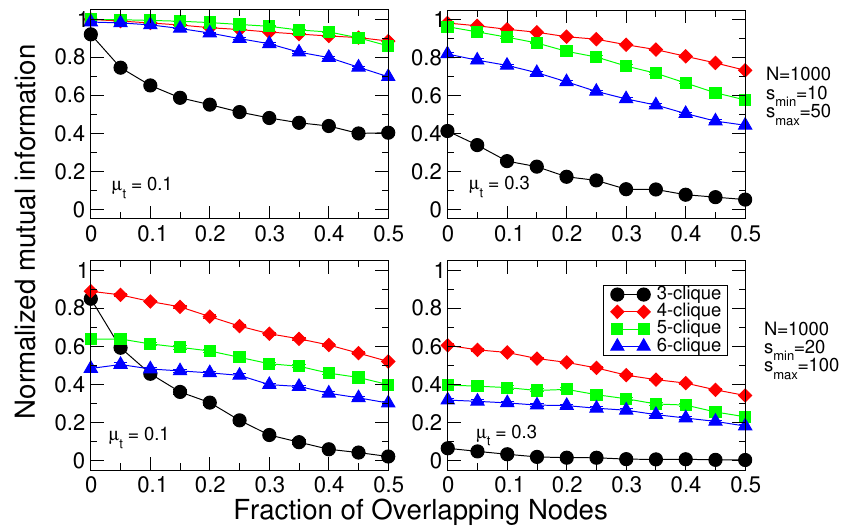}
    \end{subfigure}%
    \begin{subfigure}{0.5\textwidth}
    \centering
    \includegraphics[width=\cfinderwidth]{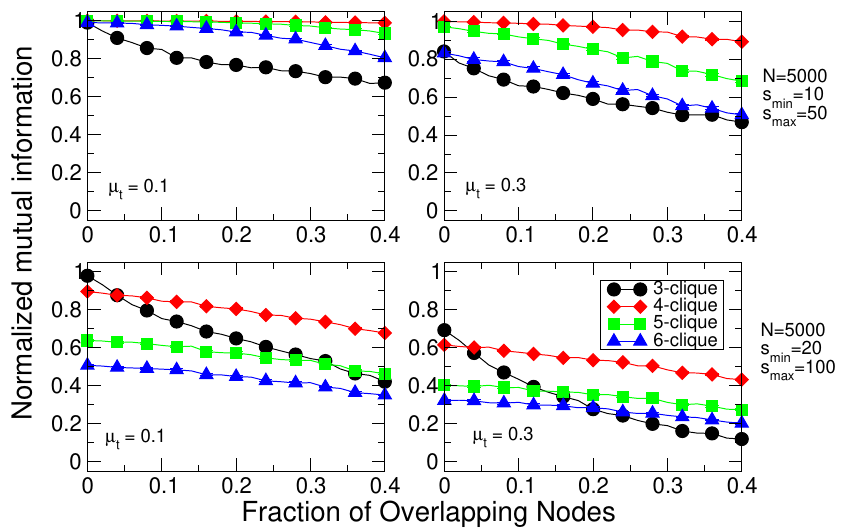}
    \end{subfigure}%
    \caption{
        Plots for CFinder on the LFR benchmark on graphs with 1000 and 5000 nodes 
		with overlapping communities. Reproduced from~\cite{LF09}.
    }\label{fig:CFinder_overlapping}
\end{figure}


%

\section{Concluding Remarks} \label{sec:conclusions}
Our algorithm seems to work very well with around 6$\%$ seed nodes for the 
non-overlapping case and around 8$\%$ seed nodes for the overlapping case. 
For the non-overlapping case, we can work with a mixing factor of up 
to~$0.5$, whereas in the overlapping case a mixing factor of $0.3$ 
and with the overlapping fraction of around $20\%$. This of course 
suggests that our algorithm has a higher tolerance while detecting 
non-overlapping communities and needs either a ``well-structured''
network or a high seed node percentage for overlapping communities. 
None of this is really surprising. What is surprising is that such 
a simple algorithm manages to do so well at all.

An obvious question is whether it is possible to avoid the semi-supervised 
step completely, that is, avoid having the user to specify seed nodes 
for every community. One possibility is to initially use a clustering 
algorithm to obtain a first approximation of the communities in the network. 
The next step would be to pick seed nodes from among the communities thus found 
(without user intervention) and use our algorithm to obtain a refinement of the 
community structure.

A possible extension of our algorithm would be to allow the user to interactively 
specify the seed nodes. The user initially supplies a set of seed nodes 
and allows the algorithm to find communities. The user then checks the 
quality of the output and, if dissatisfied with the results, can prompt the algorithm 
to correctly classify some more nodes that it had incorrectly classified in the current round. 
In effect, at the end of each round, the user supplies an additional set of seed nodes until the 
communities found out by the algorithm are accurate enough for the user. Such a tool 
might be useful for visualization.

We wish to point out that while the running time of our algorithm is 
$O(k \cdot m \cdot \log n)$, we do not know of any commercial solvers 
for SDD systems that run in $O(m \cdot \log n)$ time. Since we use the Cholesky 
factorization method from the \CPP\ Eigen Library, it is unlikely that our 
implementation would be able to handle very large networks. Recall that 
in Cholesky factorization, the matrix of coefficients $\mat{A}$ is decomposed 
as $\mat{L} \mat{D} \trans{\mat{L}}$, where $\mat{L}$ is lower triangular 
and $\mat{D}$ is diagonal, all of which takes $n^3/3$ operations making it 
prohibitively expensive for large networks (see, for instance~\cite{GvL13}). 
This is not a serious disadvantage since we expect that in the near future 
we would have commercial SDD solvers implementing the Speilman-Teng algorithm. 
It would then be interesting to see the size range of real networks our algorithm 
can handle.

\def\redefineme{
    \def\LNCS{LNCS}%
    \def\ICALP##1{Proc. of ##1 ICALP}%
    \def\FOCS##1{Proc. of ##1 FOCS}%
    \def\COCOON##1{Proc. of ##1 COCOON}%
    \def\SODA##1{Proc. of ##1 SODA}%
    \def\SWAT##1{Proc. of ##1 SWAT}%
    \def\IWPEC##1{Proc. of ##1 IWPEC}%
    \def\IWOCA##1{Proc. of ##1 IWOCA}%
    \def\ISAAC##1{Proc. of ##1 ISAAC}%
    \def\STACS##1{Proc. of ##1 STACS}
    \def\ESA##1{Proc. of ##1 ESA}%
    \def\WG##1{Proc. of ##1 WG}%
    \def\LIPIcs##1{LIPIcs}%
    \def\LIPIcs{LIPIcs}%
    \def\LICS##1{Proc. of ##1 LICS}%
}

\bibliographystyle{abbrv}
\bibliography{social_net}

\end{document}